\documentclass[letterpaper, 10 pt, conference]{ieeeconf}

\IEEEoverridecommandlockouts                              
\usepackage{amssymb,amsfonts,mathtools}
\usepackage{acronym}
\usepackage{amsmath}

\usepackage{amsthm}

\usepackage{multirow}
\usepackage{booktabs}
\usepackage{bm}
\usepackage{tikz}
\usepackage{paralist}
\usepackage{xurl}
\usetikzlibrary{shapes,arrows,fit,backgrounds,automata, positioning}
\usetikzlibrary{shapes.geometric,fit}
\tikzstyle{container} = [draw, rectangle, inner sep=0.3cm]

\usepackage{graphicx} 

\newcommand{\ie}{\textit{i.e.}}

\newcommand{\dist}{\mathcal{D}}

\newcommand{\supp}{\mathsf{Supp}}

\newtheorem{definition}{Definition}
\newtheorem{proposition}{Proposition}

\newtheorem{problem}{Problem}

\newtheorem{remark}{Remark}

\newtheorem{lemma}{Lemma}

\acrodef{pomdp}[POMDP]{partially observable Markov decision process}
\acrodef{mdp}[MDP]{Markov decision process}
\acrodef{asw}[ASW]{Almost-Sure Winning}
\acrodef{cps}[CPS]{Cyber-Physical Systems}
\acrodef{hmm}[HMM]{hidden Markov model}
\acrodef{ltlf}[LTL$_f$]{Linear Temporal Logic over Finite Traces}
\acrodef{ltl}[LTL]{linear temporal logic}
\acrodef{scltl}[co-safe LTL]{syntactically co-safe Linear Temporal Logic}
\acrodef{dfa}[DFA]{deterministic finite automaton}
\acrodef{lstm}[LSTM]{Long-Short Term Memory}

\newcommand{\Always}{\Box \, }
\newcommand{\Eventually}{\Diamond \, }
\newcommand{\Next}{\bigcirc \, }
\newcommand{\until}{\mbox{$\, {\sf U}\,$}}

\newcommand{\probs}{\mathbb{P}}
\newcommand{\Expect}{\mathbb{E}}

 \DeclareMathOperator*{\optmin}{\mathrm{minimize}}

\title{Information-Driven Active Perception for $k$-step Predictive Safety Monitoring}

\author{Sumukha Udupa and Jie Fu
\thanks{S. Udupa, and J. Fu are with the Dept. of Electrical and Computer Engineering, University of Florida, Gainesville, Fl 32611.
{\tt\small \{sudupa, fujie\}@ufl.edu}}}

\begin{document}

\maketitle
\begin{abstract}
    This work studies the synthesis of active perception policies for predictive safety monitoring in partially observable stochastic systems. Operating under strict sensing and communication budgets, the proposed monitor dynamically schedules sensor queries to maximize information gain about the safety of future states. The underlying stochastic dynamics are captured by a labeled \ac{hmm}, with safety requirements defined by a \ac{dfa}. To enable active information acquisition, we introduce minimizing $k$-step Shannon conditional entropy of the safety of future   states as a planning objective, under the constraint of a limited sensor query budget. Using observable operators, we derive an efficient algorithm to compute the $k$-step conditional entropy and     
   analyze key properties of the conditional entropy gradient with respect to policy parameters. We validate the effectiveness of the method for predictive safety monitoring through a dynamic congestion game example. 
\end{abstract}

\section{Introduction}
Runtime verification and predictive monitoring are essential for ensuring the safety of autonomous systems \cite{luckcuck2019formal, yoon2021predictive}. By continuously observing the environment and forecasting future trajectories against formal safety specifications, these monitors anticipate critical failures and enable  timely intervention before critical failures occur. However, the efficacy of traditional predictive monitoring relies heavily on the assumption of high-fidelity state observability. In practice, mobile robots and edge infrastructure operate under strict resource constraints, such as limited wireless bandwidth and sensor blind spots, rendering the environment partially observable. If a monitor acts merely as a passive recipient of intermittent data, its predictive uncertainty diverges rapidly, leading to overly conservative $k$-step safety forecasts.

Motivated by these limitations, we study the synthesis of an active perception policy \cite{bajcsy2018revisiting} for predictive safety monitoring. 
In our proposed method, we define this policy as a mapping from system's observation history to a probability distribution over a discrete set of sensor queries. Rather than passively receiving data, this active perception policy  allocates resource-constrained sensor queries to actively acquire targeted data for   minimizing the $k$-step predictive uncertainty of the safety of future states. To formalize this, 
we model the partially observable stochastic environment as a labeled \ac{hmm} and encode the formal safety specification as a \ac{dfa}.


Our contributions
are:\begin{inparaenum}
    \item We introduce information-driven active perception planning for predictive runtime monitoring in partially observable, stochastic environments, using conditional entropy to quantify the $k$-step predictive uncertainty of a formal safety specification.
    \item We develop a policy gradient approach to synthesize a budget constrained active perception policy for $k$-step predictability and leverage the observable operators \cite{jaeger2000observable} for efficient computation of the prediction probabilities.
    \item We empirically demonstrate 
    the algorithm in a dynamic congestion game, showing that our approach actively minimizes predictive safety errors and 
    closes the performance gap to an oracle with perfect state information.
\end{inparaenum}

\section{Related Work}
Traditional runtime monitoring evaluates execution traces against formal specifications to prevent failures. In \cite{yoon2021predictive}, Bayesian intent inference enables predictive monitors to proactively evaluate formal logic rules and anticipate violations. Complementary work addresses uncertainty in future state predictions using distribution-free techniques such as Conformal Prediction \cite{dietterich2022conformal}, with recent extensions incorporating temporal correlations \cite{tonkens2023scalable} and adaptive bounds for dynamic agents \cite{sheng2024safe}.  These approaches perform well when high-fidelity continuous observations are available. However, under partial observability, the resulting uncertainty bounds tend to widen, leading to overly conservative safety margins. 

Beyond passive monitoring, prediction-informed safe planning embeds forecasting models directly into the control pipeline. To mitigate dynamic model mismatch, recent methods utilize belief-space planning to actively influence agent behavior \cite{pandya2025robots}, integrate Bayesian confidence tracking \cite{fisac2018probabilistically}, adapt trajectories based on prediction variance \cite{kanazawa2019adaptive}, or employ stochastic Control Barrier Functions to enforce safety over predicted distributions \cite{busellato2025uncertainty}. While these methods manage behavioral uncertainty through physical control actions, they typically rely on continuous state tracking and focus on how to safely navigate given an uncertainty distribution. However, they do not address the precursor challenge of \textit{actively scheduling   sensing resources} to reduce observational uncertainty under bandwidth or other resource constraints.

Active perception addresses the precursor problem of constrained data acquisition by scheduling informative sensor queries. The mathematical foundations of this approach are rooted in controlled sensing for \ac{hmm}s \cite{krishnamurthy2002algorithms} and information-theoretic sensor management \cite{hero2008information}. As exact dynamic programming is intractable, modern formulations typically rely on point-based approximations or submodular greedy heuristics to maximize generic information gain \cite{satsangi2020maximizing}. Similarly, task-oriented active sensing via action entropy minimization \cite{greigarn2019task} queries sensors only when they influence the immediate control policy. 
Typically, work on active perception focuses on improving current state estimation or selecting control inputs given available observations. In contrast, our work leverages active perception for predictive monitoring, aiming to resolve uncertainty over a future $k$-step horizon. 

\section{Preliminaries and Problem Formulation}
\noindent\textbf{Notations} Let $X$ be a finite set, and $\dist(X)$ the set of all probability distributions over $X$. For $d\in\dist(X)$, $\supp(d)={x\in X\mid d(x)>0}$ is its support. Let $X^K$ and $X^\ast$ denote sequences of length $K$ and all finite sequences from $X$. $\mathbb{R}$, $\mathbb{N}$, $\mathbb{P}$ denote the reals, naturals, and probability measure. Random variables are capitalized, their realizations lowercase (e.g., $X$ and $x$), and sequences of length $K$ are $X_{0:K}$ and $x_{0:K}$.


We consider active runtime monitoring where an autonomous robot interacts with an active perception agent that infers, from partial observations, whether the robot will enter an unsafe state within $k$ steps. Limited bandwidth restricts sensor access, requiring sensor selection to minimize uncertainty about future unsafe states. The system is modeled as a labeled \ac{hmm} with a controllable emission function.

\begin{definition}[Labeled-\ac{hmm}]
\[
M = \langle S, P, O, \Sigma, \mu_0, \sigma_0, E, \mathcal{AP}, L \rangle
\]
where
\begin{itemize}
    \item $S$ is a finite set of states;
     \item $P:S\rightarrow\dist(S)$ is the probabilistic transition function and $P(s'\mid s)$ is the probability of reaching $s'$ from $s$.
      \item $O$ is the finite set of observations;
      \item $\Sigma$ is a finite set of active perception actions;
      \item $\mu_0$ is the initial state distribution;
      \item $\sigma_0$ is the initial perception action;
    \item $E:S\times\Sigma\rightarrow\dist(O)$ is the probabilistic emission function mapping each state–action pair to a distribution over observations. It is \emph{controllable} since the active perception agent selects the perception action.
   \item $\mathcal{AP}$ is the set of atomic propositions.
   \item $L:S\to 2^{\mathcal{AP}}$ is the labeling function mapping each state to the set of atomic propositions true at that state.
\end{itemize}
\end{definition}

As safety requirements often involve temporal dependencies, we use \ac{ltlf} to represent unsafe conditions.

\begin{definition}[\ac{ltlf}~\cite{de2013linear}]
 An \ac{ltlf} formula over $\mathcal{AP}$ is defined inductively as follows:
\[ \varphi := p \mid \neg\varphi \mid \varphi_1 \land \varphi_2 \mid \varphi_1 \lor \varphi_2 \mid \Next \varphi \mid \varphi_1 \until \varphi_2 \mid \Eventually \varphi \mid \Always \varphi, \]
where $p \in \mathcal{AP}$; $\neg$, $\land$ and $\lor$ are the Boolean operators; and $\Next$, $\until$, $\Eventually$ and $\Always$ denote the temporal modal operators for \texttt{next}, \texttt{until}, \texttt{eventually} and \texttt{always}, respectively.
\end{definition}

For an \ac{ltlf} formula $\varphi$ over $\mathcal{AP}$, the satisfying words are $\mathsf{Words}(\varphi)={w\in(2^{\mathcal{AP}})^*\mid w\models\varphi}$.

\begin{definition}[\ac{dfa}]
\ac{dfa} is a tuple $\mathcal{A}=(Q, 2^{\mathcal{AP}},\delta, q_{init}, F)$ with a finite set of states $Q$, a finite alphabet $2^{\mathcal{AP}}$, a transition function $\delta:Q\times 2^{\mathcal{AP}} \rightarrow Q$, extended recursively for finite words. $q_{init}$ is the initial state, and $F\subseteq Q$ is a set of accepting states.
\end{definition}



Following \cite{de2013linear}, 
the \ac{ltlf} formula is converted to a violation recognizing (Failure) \ac{dfa} accepting $\mathsf{Words}(\varphi)$ to model the coupled system–monitor dynamics.

\begin{definition}[Product \ac{hmm}]
    Given $M=\langle S, P, O, \Sigma, \mu_0, \sigma_0, E, \mathcal{AP}, L \rangle$, and the Failure \ac{dfa} $\mathcal{A}=(Q, 2^{\mathcal{AP}}, \delta, q_0, F_{fail})$, the Product \ac{hmm} is 
    \[
    \mathcal{M} = \langle Z, O, \Sigma, \mathcal{P}, E, \tilde{\mu}_0, F_Z \rangle
    \]
    where
    \begin{itemize}
        \item $Z := S\times Q$ is the product state space, with $z=(s,q)$, where $s\in S$ and $q\in Q$. 
        \item $O$ is a finite set of observations.
        \item $\Sigma$ is the finite set of active perception actions.
        \item $\mathcal{P}: Z \to \dist(Z)$ is the probabilistic transition function such that, given $z=(s, q)$ and $z'=(s',q')$,
        $$ \mathcal{P}(z'\mid z)=P(s'\mid s)\cdot \mathbf{1}(q'=\delta(q,L(s'))),
        $$
        where $\mathbf{1}(\cdot)$ is the indicator function.
        \item $E: Z\times\Sigma \to \dist(O)$ is the emission function. By slight abuse of notation, we extend it to the product space since observations depend only on the physical state $s$ and query $\sigma$, i.e., $E(o\mid (s,q),\sigma)=E(o\mid s,\sigma)$.
        \item $\tilde{\mu}_0$ is the initial product state distribution with $\tilde{\mu}_0(s,q)=\mu_0(s)$ if $q=q_0$, and $0$ otherwise.
        \item $F_Z=\{(s,q)\in Z \mid q\in F_{\text{fail}}\}$ are the failure states.
    \end{itemize}
\end{definition}

A non-stationary observation-based perception policy is a function $\pi:O^\ast \to \dist(\Sigma)$ mapping the observation history $o_{0:t}$ to a distribution $\pi(\cdot\mid o_{0:t})$ over perception actions.

As the active perception agent lacks access to true state, we use an observation-based policy. For a product \ac{hmm} $\mathcal{M}$, a policy $\pi$ induces the stochastic process $\mathcal{M}^\pi:=\{Z_t,\Sigma_t,O_t\}_{t\in\mathbb{N}}$, where $Z_t$, $\Sigma_t$, and $O_t$ denote the product state, perception action, and observation at time $t$.
Motivated by bounded predictability in partially observed systems \cite{cassez2013predictability}, we formalize $k$-step predictive safety as:
\begin{definition}[State-based $k$-step predictability]
    Given the stochastic process $\mathcal{M}^\pi$, finite horizon $T$, and a prediction look-ahead window $k$, let the random variable $W^k_t$ be:
    \[
    W^k_t = \begin{cases}
        1, & \text{if } \exists j\in\{t,\ldots, t+k\} \text{ s.t } Z_j\in F_Z,\\
        0, & \text{otherwise}.
    \end{cases}
    \]
    The \emph{quantitative $k$-step predictability} under $\pi$ is the conditional entropy of $W_t^k$ given the observation history $Y_{0:t}=(O_{0:t}, \Sigma_{0:t})$:
    \vspace{-0.175cm}
    \begin{multline}
    H(W_t^k \mid Y_{0:t}; \pi) =\\
    -\sum_{w_t^k \in \{0,1\}} \sum_{y_{0:t}\in O^t \times \Sigma^t} \probs^\pi (w_t^k,y_{0:t}) \cdot \log \probs^\pi (w_t^k \mid y_{0:t} ).
    \end{multline}
\end{definition}

Lower entropy indicates higher transparency, i.e., the monitor is confident whether a failure is imminent ($W_t^k=1$) or not ($W_t^k=0$). We define the active perception problem with a switching cost $C:\Sigma\times\Sigma\to\mathbb{R}_{\ge 0}$, independent of the robot’s state. If the cost depended on the true state, its real-time observation would act as an implicit emission channel, leaking state information.

\begin{problem}[Predictive active safety monitoring]
Given the Product \ac{hmm} $\mathcal{M}$, finite horizon $K$, and prediction window $k$, synthesize an observation-based policy $\pi \in \Pi$, where $\Pi$ is the policy space, that balances the minimization of the average $k$-step predictability entropy of $W_t^k$ given $Y_{0:t}=(O_{0:t},\Sigma_{0:t})$ and the minimization of total switching costs:
\vspace{-0.2cm}
\begin{align}
\label{eq:optimization-transp-k-step-predictability}
& \optmin_\pi 
\frac{1}{K} \sum_{t=0}^K H(W_t^k|Y_{0:t};  \pi) + \alpha  \sum_{t=0}^{K} \Expect_\pi \left[C(\Sigma_{t-1},\Sigma_{t})\right].
\end{align}
    where $\alpha\ge 0$ is a weighting parameter and the expectation is over the $\pi$-induced stochastic process $M^\pi$.
\end{problem}
\vspace{-0.2cm}



\section{Active perception for $k$-step safety}
Consider a class of parametrized policies $\{\pi_\theta\mid \theta \in \Theta\}$. Let $\mathcal{M}^\theta$ be the stochastic process induced by $\pi_\theta$, $\{Z_t, \Sigma_t, O_t, t\in \mathbb{N}\}$ with $\probs_\theta$ as the corresponding probability measure.
Then, 
given 
$\pi_\theta$, the gradient of $H(W_t^k\mid Y_{0:t}; \theta)$ is 
\begin{equation}
\begin{aligned}
\label{eq:conditional_entropy_gradient_calculation_k_step_predictability_of_failure}
 &\nabla_\theta H(W_t^k\mid Y; \theta) \\
= & - \sum_{y \in O^t\times\Sigma^t} \sum_{w_t^k \in \{0,1\}} \Big[\nabla_\theta \probs_\theta(w_t^k, y) \log \probs_\theta(w_t^k \mid y) \\
&+  \probs_\theta(w_t^k, y)\nabla_\theta \log \probs_\theta(w_t^k\mid y)\Big] \\
 = & - \sum_{y \in O^t\times \Sigma^t} \probs_\theta (y) \sum_{w_t^k \in \{0, 1\}} \Big[ \log \probs_\theta(w_t^k \mid y) \nabla_\theta \probs_\theta(w_t^k\mid y) \\
& + \probs_\theta(w_t^k\mid y) \log \probs_\theta(w_t^k \mid y) \frac{\nabla_\theta \probs_\theta(y)}{\probs_\theta(y)}   + \frac{\nabla_\theta \probs_\theta(w_t^k \mid y)}{\log 2} \Big],
\end{aligned}
\end{equation}
where $\mathcal{Y}_{\theta}^t=\{y\in O^t \times \Sigma^t \mid \probs_\theta(y)>0\}$ is the set of possible observations under the policy $\pi_\theta$ up to a time horizon $t$. 

We compute the gradient using the observable operators.
Given $Z_t, O_t, \Sigma_t$, the reversed transition matrix is $T\in\mathbb{R}^{N\times N}$ with $T_{i,j}=\probs(Z_{t+1}=i\mid Z_t=j)$. For $\sigma\in\Sigma$, the observation matrix is $B^\sigma\in\mathbb{R}^{M\times N}$ with $B^\sigma_{o,j}=E(o\mid j,\sigma)$.

\begin{definition}[Observable operator \cite{jaeger2000observable}]
Given the product \ac{hmm} $\mathcal{M}$, for any observation $o$, perception query $\sigma$, the observable operator $A_{o\mid \sigma}$ is defined as a matrix of size $N \times N$ with its $ij$-th component given as
$$
A_{o\mid \sigma}[i,j]=T_{i,j}B^\sigma_{o, j}
$$
which is the probability of transitioning from state $j$ to state $i$ and at state $j$, observing $o$ given the perception query $\sigma$. This is further expressed in the matrix representation as follows:
$$
A_{o\mid \sigma}=T\text{diag}(B^\sigma_{o, 1},\ldots, B^\sigma_{o,N}).
$$
\end{definition}

We can then express the conditional probability of remaining safe for the
next $k$ steps ($W_t^k=0$) given the history $y$.

\begin{proposition}
    \cite{udupa2025synthesis} The probability of an observation sequence $o_{0:t}$ given 
    an active perception sequence $\sigma_{0:t}$
    is 
    $$
    \probs(o_{0:t}\mid \sigma_{0:t})=\mathbf{1}^\top_N A_{o_t\mid \sigma_t}\ldots A_{o_{0}\mid \sigma_0}\tilde{\mu}_0.
    $$
\end{proposition}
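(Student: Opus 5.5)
The plan is induction on $t$, using a forward (unnormalized) vector that the observable operators propagate. The interpretation of $\probs(o_{0:t}\mid\sigma_{0:t})$ matters: the queries are treated as exogenously fixed. Since the emission at time $\tau$ depends only on $Z_\tau$ and $\sigma_\tau$, and $Z_{\tau+1}$ depends only on $Z_\tau$, conditioning on the query sequence just selects the emission matrices $B^{\sigma_\tau}$. For an observation-based policy, $\Sigma_\tau$ is a function of $O_{0:\tau-1}$ (plus independent randomization), so fixing $\sigma_{0:t}$ does not bias the hidden chain. Thus the joint law of $(Z_{0:t+1},O_{0:t})$ for a fixed $\sigma_{0:t}$ factorizes as
\[
\tilde{\mu}_0(z_0)\prod_{\tau=0}^{t} E(o_\tau\mid z_\tau,\sigma_\tau)\,\mathcal{P}(z_{\tau+1}\mid z_\tau).
\]

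Define the vector $\alpha_t\in\mathbb{R}^N$ by $\alpha_t[i]=\probs(o_{0:t},Z_{t+1}=i\mid\sigma_{0:t})$, with the convention $\alpha_{-1}=\tilde{\mu}_0$. The claim I would prove by induction is $\alpha_t=A_{o_t\mid\sigma_t}\alpha_{t-1}$. The base case $t=-1$ is the definition of $\tilde{\mu}_0$. For the inductive step, I marginalize over $Z_t=j$:
\[
\alpha_t[i]=\sum_j \probs(Z_{t+1}=i\mid Z_t=j)\,E(o_t\mid j,\sigma_t)\,\probs(o_{0:t-1},Z_t=j\mid\sigma_{0:t-1}).
\]
This uses the Markov property: given $Z_t$, both $O_t$ and $Z_{t+1}$ are independent of the past observations, and $o_t$ and $z_{t+1}$ are conditionally independent given $z_t$. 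The summand is exactly $T_{i,j}B^{\sigma_t}_{o_t,j}\alpha_{t-1}[j]=A_{o_t\mid\sigma_t}[i,j]\alpha_{t-1}[j]$, which completes the step.

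Finally, I sum out $Z_{t+1}$:
\[
\probs(o_{0:t}\mid\sigma_{0:t})=\mathbf{1}_N^\top\alpha_t=\mathbf{1}_N^\top A_{o_t\mid\sigma_t}\cdots A_{o_0\mid\sigma_0}\tilde{\mu}_0.
\]
Equivalently, the final $\mathbf{1}^\top T$ collapses to $\mathbf{1}^\top$ because $T$ is column-stochastic. This is the only reason the extra transition in the last operator is harmless.

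The main obstacle is the conditioning on $\sigma_{0:t}$ rather than the algebra. I would argue carefully that, under any observation-based policy, the likelihood $\probs(o_{0:t},\sigma_{0:t})$ equals $\prod_\tau\pi(\sigma_\tau\mid o_{0:\tau-1})$ times the policy-independent expression above. The policy factors depend only on observations, so they factor out of the sum over hidden states, and the displayed product is the conditional in the interventional sense. This same observation is what lets the paper later write $\probs_\theta(y)$ as a policy term times this operator product.
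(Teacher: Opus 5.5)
Your argument is correct and is the standard forward recursion behind the observable-operator formalism. The paper gives no proof of its own (it imports the result from \cite{udupa2025synthesis}), but your vector $\alpha_{t-1}$ is exactly the one in Prop.~\ref{prop:joint_dist_obs_and_arrived_hidden_state}, and applying one more operator and summing (equivalently, using $\mathbf{1}^\top T=\mathbf{1}^\top$) matches how the paper handles that vector in the proof of Prop.~\ref{prop:safety_probability_prop}.

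Your reading of $\probs(o_{0:t}\mid\sigma_{0:t})$ with the queries held fixed is the right one. Note, however, that ``fixing $\sigma_{0:t}$ does not bias the hidden chain'' holds only in that interventional sense: under $\mathcal{M}^\pi$, genuinely conditioning on $\Sigma_{1:t}$ carries information about $O_{0:t-1}$. The clean statement is therefore the factorization $\probs(o_{0:t},\sigma_{0:t})=\prod_{\tau=0}^{t}\pi(\sigma_\tau\mid o_{0:\tau-1})\,\mathbf{1}^\top A_{o_t\mid\sigma_t}\cdots A_{o_0\mid\sigma_0}\tilde{\mu}_0$, which you give at the end.
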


\begin{proposition}
\label{prop:joint_dist_obs_and_arrived_hidden_state}
    \cite{shi2024active} The joint distribution of a sequence of observations, and the arrived hidden state is given as
    $$
    \probs(Z_{t}=i, o_{0:t-1}\mid \sigma_{0:t-1})=\mathbf{1}_i^\top A_{o_{t-1}\mid \sigma_{t-1}}\ldots A_{o_{0}\mid\sigma_0}\tilde{\mu}_0.
    $$
\end{proposition}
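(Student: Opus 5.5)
We prove Proposition~\ref{prop:joint_dist_obs_and_arrived_hidden_state} by induction on $t$, tracking the unnormalized forward vector
\[
\alpha_t := A_{o_{t-1}\mid \sigma_{t-1}}\cdots A_{o_{0}\mid\sigma_0}\tilde{\mu}_0 \in \mathbb{R}^N .
\]
The claim is that $\alpha_t[i]=\probs(Z_t=i, o_{0:t-1}\mid \sigma_{0:t-1})$ for every $i$. The base case $t=0$ is immediate: the product of operators is empty, $\alpha_0=\tilde{\mu}_0$, and $\tilde{\mu}_0[i]=\probs(Z_0=i)$ by the definition of the product \ac{hmm}.

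For the inductive step, assume the identity holds at time $t$. We then use the law of total probability over $Z_t=j$:
\[
\probs(Z_{t+1}=i, o_{0:t}\mid\sigma_{0:t}) = \sum_j \probs(Z_{t+1}=i \mid Z_t=j)\, E(o_t\mid j,\sigma_t)\, \probs(Z_t=j, o_{0:t-1}\mid \sigma_{0:t-1}).
\]
This factorization needs three facts about the induced process. First, $O_t$ depends only on $Z_t$ and $\Sigma_t$ through $E$. Second, $Z_{t+1}$ depends only on $Z_t$ through $\mathcal{P}$ and is independent of observations and perception actions. Third, conditioning on the future action $\sigma_t$ does not change the joint law of $(Z_t, o_{0:t-1})$.

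The summand is $T_{i,j}B^{\sigma_t}_{o_t,j}\alpha_t[j] = A_{o_t\mid\sigma_t}[i,j]\,\alpha_t[j]$. Summing over $j$ gives $\alpha_{t+1}[i]=\mathbf{1}_i^\top A_{o_t\mid\sigma_t}\alpha_t$, which closes the induction.

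The main obstacle is the third fact. When actions are chosen by a policy, $\sigma_t$ depends on $o_{0:t-1}$, so "conditioning on $\sigma_{0:t}$" must be read correctly. We read it as the probability under the open-loop action sequence $\sigma_{0:t}$: fix the actions, and the emission kernel at each step is then determined. Equivalently, $\probs(\cdot\mid\sigma_{0:t})$ denotes the product of emission and transition kernels with the given actions plugged in.

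Under a closed-loop policy $\pi$, the joint probability $\probs^\pi(Z_t=i,o_{0:t-1},\sigma_{0:t-1})$ equals this quantity times $\prod_{s<t}\pi(\sigma_s\mid o_{0:s-1})$. The policy factors do not depend on the hidden states, so they pull out of the sum over $j$. This is the same argument that yields the preceding proposition from \cite{udupa2025synthesis}: that result follows from this one by summing over $i$, i.e., multiplying by $\mathbf{1}^\top_N$, after one more application of $A_{o_t\mid\sigma_t}$, using that the columns of $T$ sum to one.
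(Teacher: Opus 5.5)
Your proof is correct. The paper gives no proof of this proposition; it imports it from \cite{shi2024active}. Your forward-recursion induction is the standard argument behind it. Its key step, the factorization $\probs(Z_t=i,o_{0:t}\mid\sigma_{0:t})=E(o_t\mid i,\sigma_t)\,\probs(Z_t=i,o_{0:t-1}\mid\sigma_{0:t-1})$ followed by propagation through $T$, is the same manipulation the paper performs inside the proof of Proposition~\ref{prop:safety_probability_prop}.

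Your treatment of the conditioning is worth keeping. You read ``$\mid\sigma_{0:t-1}$'' as an open-loop, plugged-in action sequence, and you note that the observation-based policy factors $\prod_{s<t}\pi(\sigma_s\mid o_{0:s-1})$ do not depend on the hidden state and so factor out. The paper leaves this implicit. Under a literal closed-loop conditional $\probs^\pi(\cdot\mid\sigma_{0:t-1})$ the identity fails in general, because dividing by $\probs^\pi(\sigma_{0:t-1})$ does not cancel those policy factors. This decoupling is also what later justifies treating $\probs(W_t^k=0\mid y)$ as independent of $\theta$ in Lemma~\ref{lemma:gradient_zero}.
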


\begin{proposition}
    \cite{shi2024active} The probability of a sequence $y=(o_{0:t}, \sigma_{0:t})$ of observations and active sensing queries in $\mathcal{M}^\theta$: 
    $$
    \probs_\theta(o_{0:t}, \sigma_{0:t})=\frac{\probs(o_{0:t}\mid \sigma_{0:t})}{\probs(o_0\mid \sigma_0)}\prod_{i=o}^t \pi_\theta(\sigma_i\mid o_{0:i-1}),
    $$
    where $o_{0:-1}:=\emptyset$ is the initial empty observation.
\end{proposition}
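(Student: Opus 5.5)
The plan is to factor the joint probability of $y=(o_{0:t},\sigma_{0:t})$ with the chain rule along the time index. At each step $i$ I will separate the policy's choice of $\sigma_i$ from the environment's emission of $o_i$. The relevant conditioning is on the full history $(o_{0:i-1},\sigma_{0:i-1})$. Concretely,
\[
\probs_\theta(o_{0:t},\sigma_{0:t})=\prod_{i=0}^{t}\probs_\theta(\sigma_i\mid o_{0:i-1},\sigma_{0:i-1})\,\probs_\theta(o_i\mid o_{0:i-1},\sigma_{0:i}).
\]
There are two kinds of factors to handle.

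\textbf{Policy factors.} Since $\pi_\theta$ is observation-based, $\probs_\theta(\sigma_i\mid o_{0:i-1},\sigma_{0:i-1})=\pi_\theta(\sigma_i\mid o_{0:i-1})$. For $i=0$ this reads $\pi_\theta(\sigma_0\mid\emptyset)$, using the convention $o_{0:-1}=\emptyset$. This gives exactly the product term in the statement.

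\textbf{Emission factors.} For these I will use conditional independence. Given the hidden trajectory, $\Sigma_i$ depends only on past observations. Also, the policy randomization is independent of the hidden state once $o_{0:i-1}$ is fixed. Hence the posterior over $Z_i$ given $(o_{0:i-1},\sigma_{0:i})$ equals the posterior computed as if the queries $\sigma_{0:i}$ had been fixed in advance (open loop). This is the \emph{key step}, and I expect it to be the main obstacle. I would argue it by writing out the joint density of $(Z_{0:i},\Sigma_{0:i},O_{0:i})$ explicitly. The $\pi_\theta$ factors are functions of observations only, so they cancel in the ratio defining the posterior.

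It then follows that $\probs_\theta(o_i\mid o_{0:i-1},\sigma_{0:i})=\probs(o_i\mid o_{0:i-1},\sigma_{0:i})$, the open-loop conditional. Multiplying these over $i$ telescopes to $\probs(o_{0:t}\mid\sigma_{0:t})$. That quantity is given by the first proposition as $\mathbf{1}_N^\top A_{o_t\mid\sigma_t}\cdots A_{o_0\mid\sigma_0}\tilde\mu_0$. The consistency of these marginals with Proposition~\ref{prop:joint_dist_obs_and_arrived_hidden_state} can be checked by summing over $Z_t$.

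\textbf{The normalizing factor.} I would then reconcile the result with the factor $1/\probs(o_0\mid\sigma_0)$ in the statement. It must stem from the paper's convention on the initial step, where $\sigma_0$ is the fixed initial action and $o_0$ is treated as given. I would make that convention explicit: condition on $o_0$ and absorb the first emission factor accordingly. Then the product of the remaining conditionals is $\probs(o_{0:t}\mid\sigma_{0:t})/\probs(o_0\mid\sigma_0)$. If the convention turns out to be otherwise, the same derivation yields the unnormalized form, differing only by this initial factor.
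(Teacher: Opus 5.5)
The paper gives no proof of this proposition; it only quotes it from the cited work. Your chain-rule argument is the standard one, and the cancellation of the policy factors is correct. You can skip the posterior step entirely. The joint law of $(Z_{0:t},\Sigma_{0:t},O_{0:t})$ is $\tilde\mu_0(z_0)\prod_{i=0}^{t}\pi_\theta(\sigma_i\mid o_{0:i-1})\,E(o_i\mid z_i,\sigma_i)\prod_{i=0}^{t-1}\mathcal{P}(z_{i+1}\mid z_i)$. The policy factors do not involve $z$, so summing over $z_{0:t}$ leaves exactly $\probs(o_{0:t}\mid\sigma_{0:t})$. Either way, what you have established is
\[
\probs_\theta(o_{0:t},\sigma_{0:t})=\probs(o_{0:t}\mid\sigma_{0:t})\prod_{i=0}^{t}\pi_\theta(\sigma_i\mid o_{0:i-1}),
\]
with no factor $1/\probs(o_0\mid\sigma_0)$.

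The gap is your final paragraph, and it cannot be closed as stated. Nothing in the derivation produces $1/\probs(o_0\mid\sigma_0)$. Under the natural reading ($\sigma_0\sim\pi_\theta(\cdot\mid\emptyset)$, $o_0$ random), the displayed identity is false:
\begin{itemize}
\item For $t=0$, the right-hand side collapses to $\pi_\theta(\sigma_0\mid\emptyset)$, whereas $\probs_\theta(o_0,\sigma_0)=\pi_\theta(\sigma_0\mid\emptyset)\probs(o_0\mid\sigma_0)$.
\item For general $t$, summing the stated right-hand side over all $y$ gives $\sum_{\sigma_0}\pi_\theta(\sigma_0\mid\emptyset)\,\bigl|\{o_0:\probs(o_0\mid\sigma_0)>0\}\bigr|$. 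This exceeds $1$ as soon as some query chosen with positive probability at time $0$ has two possible outcomes.
\end{itemize}
So your closing sentence has the normalization backwards: the factor-free formula is the normalized joint law, and the stated one is not.

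The statement can only be rescued by changing its left-hand side. If $\sigma_0$ is the fixed initial query listed in the HMM tuple (so $\pi_\theta(\sigma_0\mid\emptyset)=1$), the stated right-hand side equals $\probs_\theta(o_{1:t},\sigma_{1:t}\mid o_0,\sigma_0)$. Present this as an explicit redefinition of $\probs_\theta$, not as a factor the derivation ``absorbs.''

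It is also worth recording why the discrepancy is harmless for the rest of the paper. The quantity $\probs(o_0\mid\sigma_0)=\sum_{z}\tilde\mu_0(z)E(o_0\mid z,\sigma_0)$ does not depend on $\theta$. Hence $\nabla_\theta\log\probs_\theta(y)=\sum_{i=0}^{t}\nabla_\theta\log\pi_\theta(\sigma_i\mid o_{0:i-1})$ and the sampled gradient estimators are unaffected. The factor would matter only if $\probs_\theta(y)$ were used as an exact weight when enumerating histories in the entropy.
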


\begin{proposition}
    \cite{shi2024active} Given $y=(o_{0:t}, \sigma_{0:t})$, the gradient of $\log \probs_\theta(y)$ is 
    computed as 
    $$
    \nabla_\theta \log \probs_\theta(y) = \sum_{i=0}^t\nabla_\theta\log\pi_\theta(\sigma_i\mid o_{0:i-1}).
    $$
\end{proposition}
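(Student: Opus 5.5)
The statement follows from the product formula of the preceding proposition. Since $\probs_\theta(y)>0$ for $y\in\mathcal{Y}^t_\theta$, we may take logarithms of
\[
\probs_\theta(o_{0:t},\sigma_{0:t})=\frac{\probs(o_{0:t}\mid \sigma_{0:t})}{\probs(o_0\mid \sigma_0)}\prod_{i=0}^t \pi_\theta(\sigma_i\mid o_{0:i-1}).
\]
This gives
\[
\log \probs_\theta(y)=\log \probs(o_{0:t}\mid\sigma_{0:t})-\log \probs(o_0\mid\sigma_0)+\sum_{i=0}^t \log\pi_\theta(\sigma_i\mid o_{0:i-1}).
\]
Differentiating term by term in $\theta$ then yields the claim, provided the first two terms do not depend on $\theta$.

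\textbf{Step 1: the likelihood terms are $\theta$-free.} By the first proposition,
\[
\probs(o_{0:t}\mid\sigma_{0:t})=\mathbf{1}_N^\top A_{o_t\mid\sigma_t}\cdots A_{o_0\mid\sigma_0}\tilde\mu_0 .
\]
Each observable operator $A_{o\mid\sigma}=T\,\mathrm{diag}(B^\sigma_{o,\cdot})$ is built only from the transition matrix $T$ and the emission matrices $B^\sigma$. These are properties of the product \ac{hmm}, and $\tilde\mu_0$ is fixed. Once the realized query sequence $\sigma_{0:t}$ is fixed, this quantity is a constant in $\theta$. The same holds for $\probs(o_0\mid\sigma_0)=\mathbf{1}_N^\top A_{o_0\mid\sigma_0}\tilde\mu_0$. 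Both gradients therefore vanish.

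\textbf{Step 2: differentiate the policy factors.} Each remaining term is $\nabla_\theta\log\pi_\theta(\sigma_i\mid o_{0:i-1})$, which is well defined wherever $\pi_\theta$ is differentiable and positive. Positivity holds on $\mathcal{Y}_\theta^t$, because each factor must be positive for the product to be positive. Summing over $i$ gives the stated identity.

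\textbf{Main obstacle.} The delicate point is Step 1: justifying that conditioning on the realized query sequence fully decouples the dynamics from $\theta$. This needs two facts. First, the policy depends only on the observation history and not on the hidden state $Z$, so there is no hidden $\theta$-dependence through the state. Second, the switching cost does not enter the emission channel, as the modelling discussion stipulates. I would make this explicit by factorizing the joint law of $(Z_{0:t},O_{0:t},\Sigma_{0:t})$ under $\pi_\theta$ into transition, emission and policy factors. Marginalizing $Z$ leaves the policy factors outside the sum over states, which recovers the product formula and confirms that only those factors carry $\theta$.
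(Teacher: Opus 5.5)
Your proposal is correct and is essentially the intended argument; the paper itself just imports the result from \cite{shi2024active}. You take logarithms in the product formula of the preceding proposition, observe that $\probs(o_{0:t}\mid\sigma_{0:t})$ and $\probs(o_0\mid\sigma_0)$ are $\theta$-free observable-operator expressions (the policy depends only on observations and not on $Z$), and differentiate the remaining policy terms. Two minor remarks: the switching cost never enters the law of $y$, so your second ``fact'' is not needed; and carrying out the factorization you sketch gives $\probs_\theta(y)=\probs(o_{0:t}\mid\sigma_{0:t})\prod_{i=0}^{t}\pi_\theta(\sigma_i\mid o_{0:i-1})$ without the $1/\probs(o_0\mid\sigma_0)$ factor, which is harmless here because that factor does not depend on $\theta$.
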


To evaluate safety over $t$ to $t+k$, we have the following.

\begin{proposition}
\label{prop:safety_probability_prop}
    Given $y=(o_{0:t}, \sigma_{0:t})$, the probability $\probs(W_t^k=0\mid y)$ can be written as follows
\begin{multline*} 
\probs(W_t^k=0 \mid y) =\\
\frac{\mathbf{1}^\top (D T)^k D \text{  } \text{diag}(B^{\sigma_t}_{o_t}) \left( A_{o_{t-1}\mid \sigma_{t-1}} \dots A_{o_0\mid \sigma_0} \tilde{\mu}_0 \right)}{\probs(o_{0:t}\mid \sigma_{0:t})}, 
\end{multline*}
where $D\in \mathbb{R}^{N\times N}$ is a diagonal matrix defined as

$$
D(z,z) = \begin{cases}
    1 &\text{if } z\notin F_Z,\\
    0 &\text{otherwise.}
\end{cases}
$$
\end{proposition}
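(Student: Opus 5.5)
We write $\probs(W_t^k=0\mid y)=\probs(W_t^k=0, o_{0:t}\mid \sigma_{0:t})/\probs(o_{0:t}\mid\sigma_{0:t})$ and compute the numerator as a forward recursion with observable operators. Conditioning on $\sigma_{0:t}$ is legitimate: under an observation-based policy, $\sigma_i$ depends only on $o_{0:i-1}$, so the policy factors cancel between numerator and denominator. This is the same cancellation used in the proposition giving $\probs_\theta(o_{0:t},\sigma_{0:t})$. Hence the ratio equals the conditional probability in the policy-independent model driven by the fixed query sequence.

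The first step is to obtain the joint of the current state and all observations. Proposition~\ref{prop:joint_dist_obs_and_arrived_hidden_state} gives the vector $v_t:=A_{o_{t-1}\mid\sigma_{t-1}}\cdots A_{o_0\mid\sigma_0}\tilde\mu_0$, with entries $v_t(i)=\probs(Z_t=i,o_{0:t-1}\mid\sigma_{0:t-1})$. By the emission structure, $O_t$ depends only on $Z_t$ and $\sigma_t$. Multiplying by $\text{diag}(B^{\sigma_t}_{o_t})$ therefore yields the vector with entries $\probs(Z_t=i,o_{0:t}\mid\sigma_{0:t})$. Summing this vector recovers the denominator, consistent with the first proposition, since $\mathbf{1}^\top T=\mathbf{1}^\top$ because $T$ is column-stochastic.

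The second step propagates the safety event. The event $W_t^k=0$ is $\{Z_t\notin F_Z,\dots,Z_{t+k}\notin F_Z\}$, and we handle it by induction on the number of steps. Left-multiplying by $D$ kills the mass on failure states at time $t$. We claim that $u_j:=(DT)^{j}D\,\text{diag}(B^{\sigma_t}_{o_t})v_t$ has entries $u_j(i)=\probs(Z_{t+j}=i,\ Z_{t},\dots,Z_{t+j}\notin F_Z,\ o_{0:t}\mid\sigma_{0:t})$. The inductive step uses the Markov property of $Z$ and the fact that future states are conditionally independent of past observations given $Z_{t+j}$. Applying $T$ gives the distribution of $Z_{t+j+1}$ jointly with the event so far, and applying $D$ then enforces $Z_{t+j+1}\notin F_Z$. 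No observations are incorporated after time $t$, so no $B$ factors appear, which corresponds to marginalizing over future observations. Taking $j=k$ and applying $\mathbf{1}^\top$ marginalizes over $Z_{t+k}$ and gives the numerator.

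The main obstacle is justifying the conditional-independence steps in the induction, namely that the future trajectory given $Z_{t+j}$ is independent of $o_{0:t}$ and of the queries. This follows from the HMM factorization of the joint law, $\tilde\mu_0(z_0)\prod_i E(o_i\mid z_i,\sigma_i)\prod_i\mathcal P(z_{i+1}\mid z_i)$. We will write the numerator explicitly as a sum over $z_{0:t+k}$ of this product with indicators $\prod_{j=t}^{t+k}\mathbf 1(z_j\notin F_Z)$. The matrix product is then simply this sum evaluated right-to-left. The check that remains is index conventions: $T_{i,j}$ is the probability of moving from $j$ to $i$, so column vectors are propagated by left multiplication.
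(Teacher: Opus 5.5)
Your proposal is correct and follows essentially the same route as the paper: write the target as $\probs(W_t^k=0,o_{0:t}\mid\sigma_{0:t})/\probs(o_{0:t}\mid\sigma_{0:t})$, obtain $\probs(Z_t,o_{0:t}\mid\sigma_{0:t})$ by applying $\text{diag}(B^{\sigma_t}_{o_t})$ to the vector of Proposition~\ref{prop:joint_dist_obs_and_arrived_hidden_state}, project with $D$, propagate with $(DT)^k$, and sum with $\mathbf{1}^\top$. Your explicit inductive invariant for $(DT)^jD\,\text{diag}(B^{\sigma_t}_{o_t})v_t$ and your argument that the policy factors cancel (which justifies conditioning on $\sigma_{0:t}$) make rigorous two steps that the paper only asserts.
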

\begin{proof}
    The event $W_t^k=0$ corresponds to the condition that the system remains in the safe state space, \ie, $z_\tau \notin F_Z$ for all $\tau\in \{t,\ldots, t+k\}$.
    We seek the conditional probability: 
    \begin{equation}
    \label{eq:conditional_probability}
    \probs(W_t^k=0\mid o_{0:t}, \sigma_{0:t}) = \frac{\probs(W_t^k=0, o_{0:t}\mid \sigma_{0:t})}{\probs(o_{0:t}\mid \sigma_{0:t})}.
    \end{equation}
    The numerator is the probability of safe trajectories joint with the observation sequence, conditioned on perception actions. 
    Let $\probs(Z_t,o_{0:t}\mid\sigma_{0:t})\in\mathbb{R}^N$ be the joint distribution vector of $Z_t$ and $o_{0:t}$ given $\sigma_{0:t}$, with $i$-th entry $\probs(Z_t=i,o_{0:t}\mid\sigma_{0:t})$.
    \begin{multline}
        \probs(Z_t=i,o_{0:t}\mid \sigma_{0:t})=\\\probs(o_t\mid Z_t=i, \sigma_t)\cdot \probs(Z_t=i, o_{0:t-1}\mid \sigma_{0:t-1}).    
    \end{multline}
    Element-wise multiplication over states $i\in \{1,\ldots, N\}$ is:
    \begin{multline}
        \probs(Z_t,o_{0:t}\mid \sigma_{0:t})=\\\text{diag}(B^{\sigma_t}_{o_t})\cdot \probs(Z_t, o_{0:t-1}\mid \sigma_{0:t-1}).  
    \end{multline}
    By Prop.~\ref{prop:joint_dist_obs_and_arrived_hidden_state}, the column vector is
    \begin{multline}
        \probs(Z_t,o_{0:t}\mid \sigma_{0:t})=\\
        \text{diag}(B^{\sigma_t}_{o_t})\left( A_{o_{t-1}\mid \sigma_{t-1}} \dots A_{o_0\mid \sigma_0} \tilde{\mu}_0 \right).    
    \end{multline}
    To isolate the probability mass strictly in the safe states at time $t$, we left-multiply by the diagonal matrix $D\in \mathbb{R}^{N\times N}$. Matrix $D$ acts as a linear projection operator that zeros out the probability mass associated with any failure state $z\in F_Z$:
    \begin{multline}
        D\probs(Z_t,o_{0:t}\mid \sigma_{0:t})=\\
        D\text{  } \text{diag}(B^{\sigma_t}_{o_t})\left( A_{o_{t-1}\mid \sigma_{t-1}} \dots A_{o_0\mid \sigma_0} \tilde{\mu}_0 \right).    
    \end{multline}
    The resulting vector in $\mathbb{R}^N$ is the joint probability 
    of arriving at a safe state at time $t$ and generating 
    $o_{0:t}$.
    The marginalized joint probability of the arrived state being safe and the observation sequence, conditioned on the query sequence is
    \begin{multline}
        \probs(Z_t\notin F_Z,o_{0:t}\mid \sigma_{0:t})=\\
        \mathbf{1}^\top D\text{  }\text{diag}(B^{\sigma_t}_{o_t})\left( A_{o_{t-1}\mid \sigma_{t-1}} \dots A_{o_0\mid \sigma_0} \tilde{\mu}_0 \right).    
    \end{multline}    

    For $k$-step future horizon, the probability mass is propagated forward using 
    $T$. To ensure the trajectory remains strictly within the safe set at each time step, we iteratively apply the projection matrix $D$ after each transition. The operator $(DT)^k$ tracks the forward-reachable probability mass over exactly $k$ safe transitions. Applying this operator yields the joint probability of safe trajectories and the observation sequence conditioned on query sequence unto time $t+k$:
    \begin{multline}
\probs(W_t^k=0, o_{0:t}\mid \sigma_{0:t})=\probs(\cap_{\tau=t}^{t+k} Z_\tau \notin F_Z, o_{0:t} \mid \sigma_{0:t}) \\= \mathbf{1}^\top (D T)^k D\text{  } \text{diag}(B^{\sigma_t}_{o_t}) \left( A_{o_{t-1}\mid \sigma_{t-1}} \dots A_{o_0\mid \sigma_0} \tilde{\mu}_0 \right)
\end{multline}
Substituting in Eq.~\ref{eq:conditional_probability},
\begin{multline} 
\label{eq:safe_prob_calc}
\probs(W_t^k=0 \mid y) =\\
\frac{\mathbf{1}^\top (D T)^k D\text{  } \text{diag}(B^{\sigma_t}_{o_t}) \left( A_{o_{t-1}\mid \sigma_{t-1}} \dots A_{o_0\mid \sigma_0} \tilde{\mu}_0 \right)}{\probs_\theta(o_{0:t}\mid \sigma_{0:t})}. 
\end{multline}
\end{proof}



\begin{remark}
    Safety violations such as collisions are terminal, making the failure set $F_Z$ absorbing(no transitions lead back to the safe set $Z\setminus F_Z$). Any probability mass entering $F_Z$ over the $k$-step horizon remains there, so $Z_{t+k}\notin F_Z$ implies safety at all prior steps $\tau\in{t,\dots,t+k}$. This simplifies predictive safety computation as the iterative operator $(DT)^kD$ reduces to a single terminal projection $DT^k$, yielding:
    \begin{multline}
\probs(W_t^k=0 \mid y) =\\
\frac{\mathbf{1}^\top D T^k \text{diag}(B^{\sigma_t}_{o_t}) \left( A_{o_{t-1}\mid \sigma_{t-1}} \dots A_{o_0\mid \sigma_0} \tilde{\mu}_0 \right)}{\probs_\theta(o_{0:t}\mid \sigma_{0:t})}.    
\end{multline}
\end{remark}

We can then obtain $\probs(W_t^k=1\mid y)=1-\probs(W_t^k=0\mid y)$.

\begin{lemma}
\label{lemma:gradient_zero}
For a fixed realized history $y\in O^t\times \Sigma^t$, the predictive safety probability $\probs(W_t^k=0\mid y)$ and the conditional entropy $H(W_t^k\mid y)$ are strictly independent of the policy parameters $\theta$. Consequently,
$$
\nabla_\theta \probs(W_t^k = 0 \mid y) = \mathbf{0}, \quad \text{and} \quad \nabla_\theta H(W_t^k \mid y) = \mathbf{0}.
$$
\end{lemma}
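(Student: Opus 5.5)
The plan is to read off $\theta$-independence from the closed form in Proposition~\ref{prop:safety_probability_prop}, and then pass it to the entropy by composition. First I fix a realized history $y=(o_{0:t},\sigma_{0:t})$ with $\probs_\theta(y)>0$, so that the conditional probability is well defined. This restricts attention to $\mathcal{Y}^t_\theta$, the only histories that enter the gradient expression \eqref{eq:conditional_entropy_gradient_calculation_k_step_predictability_of_failure}.

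Proposition~\ref{prop:safety_probability_prop} writes $\probs(W_t^k=0\mid y)$ as a ratio. The numerator is $\mathbf{1}^\top (DT)^k D\,\text{diag}(B^{\sigma_t}_{o_t})\,A_{o_{t-1}\mid\sigma_{t-1}}\cdots A_{o_0\mid\sigma_0}\tilde{\mu}_0$ and the denominator is $\probs(o_{0:t}\mid\sigma_{0:t})=\mathbf{1}^\top A_{o_t\mid\sigma_t}\cdots A_{o_0\mid\sigma_0}\tilde{\mu}_0$. Every ingredient is a model primitive: $T$, $B^\sigma$, $D$, $\tilde\mu_0$, and the realized symbols $o_i,\sigma_i$. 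None of them involves $\pi_\theta$, so both numerator and denominator are constants in $\theta$.

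The main obstacle is justifying that this formula really is the conditional probability under $\probs_\theta$, and not just under a fixed open-loop query sequence. I would use the factorization in the third proposition, $\probs_\theta(o_{0:t},\sigma_{0:t})=\probs(o_{0:t}\mid\sigma_{0:t})\prod_i\pi_\theta(\sigma_i\mid o_{0:i-1})$, up to the normalization shown there. The same argument applies to the joint $\probs_\theta(W_t^k=0,o_{0:t},\sigma_{0:t})$. It rests on the fact that the policy depends only on past observations, so $\sigma_i$ is conditionally independent of $Z_i$ given $o_{0:i-1}$, and the emission and transition kernels do not depend on $\theta$. Hence the joint factorizes as $\probs(W_t^k=0,o_{0:t}\mid\sigma_{0:t})\prod_i\pi_\theta(\sigma_i\mid o_{0:i-1})$. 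In the ratio the policy product cancels, and because $y$ has positive probability this product is nonzero. The conditional therefore equals the $\theta$-free expression of Proposition~\ref{prop:safety_probability_prop}, and $\nabla_\theta\probs(W_t^k=0\mid y)=\mathbf{0}$.

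Finally, $\probs(W_t^k=1\mid y)=1-\probs(W_t^k=0\mid y)$ is also $\theta$-independent. The pointwise entropy is $H(W_t^k\mid y)=-\sum_{w\in\{0,1\}}\probs(w\mid y)\log\probs(w\mid y)$, a fixed function of these two constants, so by the chain rule $\nabla_\theta H(W_t^k\mid y)=\mathbf{0}$. When $\probs(w\mid y)\in\{0,1\}$ I take the convention $0\log 0=0$; the value is then still constant in $\theta$. As a consequence, in \eqref{eq:conditional_entropy_gradient_calculation_k_step_predictability_of_failure} the terms containing $\nabla_\theta\probs_\theta(w_t^k\mid y)$ vanish, and only the score-function term through $\nabla_\theta\probs_\theta(y)$ remains.
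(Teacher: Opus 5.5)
Your proposal is correct and follows the same route as the paper: read off $\theta$-independence from the closed form in Proposition~\ref{prop:safety_probability_prop}, then pass it to $H(W_t^k\mid y)$ as a fixed function of that probability. Your additional step---factoring $\prod_i \pi_\theta(\sigma_i\mid o_{0:i-1})$ out of both the joint and the marginal under $\probs_\theta$ so that it cancels in the ratio---makes explicit what the paper's proof takes for granted: the open-loop observable-operator expression is indeed the closed-loop conditional under $\probs_\theta$.
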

\begin{proof}
By Prop.~\ref{prop:safety_probability_prop}, the conditional probability $\probs(W_t^k=0 \mid y)$ is entirely determined by $T$, $D$, $A_{o \mid \sigma}$, and $B^{\sigma_t}_{o_t}$. As these matrices are strictly independent of the policy parameters $\theta$, for a fixed history $y$, the probability acts as a constant with respect to $\theta$, yielding $\nabla_\theta \probs(W_t^k = 0 \mid y) = \mathbf{0}$. As conditional entropy is strictly determined by this probability, 
$\nabla_\theta H(W_t^k \mid y) = \mathbf{0}$.
    
\end{proof}

\begin{lemma}
    The gradient of the conditional entropy with respect to the policy parameters $\theta$ is 
    \begin{multline}
    \nabla_\theta H(W_t^k\mid Y;\theta)=\\\mathbb{E}_{y\sim\pi_\theta}[H(W_t^k\mid Y=y)\nabla_\theta \log{\probs_\theta(y)}].
    \end{multline}
\end{lemma}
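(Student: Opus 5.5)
We prove the identity by rewriting the conditional entropy as an expectation over histories and then applying the product rule together with the log-derivative trick; Lemma~\ref{lemma:gradient_zero} removes one of the two terms. First, factor the joint distribution as $\probs_\theta(w_t^k,y)=\probs_\theta(y)\,\probs(w_t^k\mid y)$. Lemma~\ref{lemma:gradient_zero} (equivalently Prop.~\ref{prop:safety_probability_prop}) shows that the posterior $\probs(w_t^k\mid y)$ does not depend on $\theta$: the factors $\pi_\theta(\sigma_i\mid o_{0:i-1})$ appear in both the numerator and the denominator and cancel. Substituting this factorization into the definition gives
\[
H(W_t^k\mid Y;\theta)=\sum_{y\in\mathcal{Y}_\theta^t}\probs_\theta(y)\,H(W_t^k\mid Y=y),
\]
where $H(W_t^k\mid Y=y)=-\sum_{w\in\{0,1\}}\probs(w\mid y)\log\probs(w\mid y)$ is a function of $y$ alone.

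Next, differentiate this sum term by term. Because the horizon $t$ is finite, the sum is finite and differentiation passes inside. The product rule gives
\[
\sum_y \nabla_\theta\probs_\theta(y)\,H(W_t^k\mid Y=y)+\sum_y\probs_\theta(y)\,\nabla_\theta H(W_t^k\mid Y=y).
\]
The second sum vanishes by Lemma~\ref{lemma:gradient_zero}. In the first sum, write $\nabla_\theta\probs_\theta(y)=\probs_\theta(y)\nabla_\theta\log\probs_\theta(y)$, which turns it into $\mathbb{E}_{y\sim\pi_\theta}[H(W_t^k\mid Y=y)\nabla_\theta\log\probs_\theta(y)]$. Finally, express $\nabla_\theta\log\probs_\theta(y)$ through the stated proposition as $\sum_i\nabla_\theta\log\pi_\theta(\sigma_i\mid o_{0:i-1})$, which makes the formula computable.

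As a consistency check, the same result follows from the expanded gradient in \eqref{eq:conditional_entropy_gradient_calculation_k_step_predictability_of_failure}. Setting $\nabla_\theta\probs_\theta(w\mid y)=0$ there removes the first and third bracketed terms. The middle term then collapses to $-\sum_y\nabla_\theta\probs_\theta(y)\sum_w\probs(w\mid y)\log\probs(w\mid y)$, which matches the expression above.

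The main obstacle is the support issue. The set $\mathcal{Y}^t_\theta$ of histories with $\probs_\theta(y)>0$ can depend on $\theta$, and the log-derivative trick requires $\probs_\theta(y)>0$. We handle this by summing over all $y\in O^t\times\Sigma^t$. Histories with $\probs_\theta(y)=0$ contribute zero to the entropy by the convention $0\log 0=0$. We then assume, as is standard for softmax-type parametrizations, that $\pi_\theta$ has full support, or at least that the support is locally constant in $\theta$. Under that assumption each term is differentiable, the zero-probability histories contribute nothing to the gradient, and the expectation is well defined.
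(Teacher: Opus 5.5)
Your proof is correct and is essentially the paper's argument. The paper starts from the expanded three-term gradient in \eqref{eq:conditional_entropy_gradient_calculation_k_step_predictability_of_failure}, uses Lemma~\ref{lemma:gradient_zero} to drop the terms containing $\nabla_\theta \probs(w_t^k\mid y)$, and collapses what remains into $\sum_{y}\probs_\theta(y)H(W_t^k\mid y)\nabla_\theta\log\probs_\theta(y)$. That is exactly your ``consistency check,'' and your primary route (product rule applied to $\sum_y \probs_\theta(y)H(W_t^k\mid Y=y)$) is the same computation reorganized.

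Your treatment of the support $\mathcal{Y}^t_\theta$ is extra care the paper omits. Under the paper's softmax parametrization the support is in fact independent of $\theta$, so your assumption holds automatically.
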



\begin{proof}
Using Eq.~\ref{eq:conditional_entropy_gradient_calculation_k_step_predictability_of_failure}, we isolate the components of the gradient. By Lemma~\ref{lemma:gradient_zero}, we know that the conditional predictive safety probability is independent of the policy parameters, meaning $\nabla_\theta \probs(w_t^k\mid y) = \mathbf{0}$. 

Substituting this into the expansion, the first and third terms inside the inner summation strictly vanish. The expression therefore simplifies to:
\begin{equation}
\begin{aligned}
\label{eq:conditional_entropy_gradient_calculation_k_step_predictability_of_failure_modified_shortened}
 &\nabla_\theta H(W_t^k\mid Y) \\
 = & - \sum_{y \in \mathcal{Y}^t_{\theta}} \probs_\theta (y) \sum_{w_t^k \in \{0, 1\}} \Big[ \probs(w_t^k\mid y) \log_2 \probs(w_t^k \mid y) \\&\nabla_\theta \log \probs_\theta(y) \Big] \\
 = & \sum_{y \in \mathcal{Y}^t_{\theta}} \probs_\theta (y) \Big[ H(W_t^k \mid y) \nabla_\theta \log \probs_\theta(y) \Big] \\
 = & \mathbb{E}_{y\sim \pi_\theta} \Big[ H(W_t^k\mid y) \nabla_\theta \log \probs_\theta(y) \Big],
\end{aligned}
\end{equation}
where the expectation is taken over the stochastic trajectories induced by the perception policy $\pi_\theta$.
\end{proof}

To obtain the locally optimal policy parameter $\theta$, we initialize with $\theta_0$ and carryout the following gradient descent at each iteration $\tau\geq 1$,
\begin{multline}
\label{eq:gradient_desent_eq}
    \theta_{\tau+1}=\theta_\tau - \eta [\frac{1}{K}\sum_{t=0}^K\nabla_\theta H(W_t^k\mid Y_{0:t};\theta_\tau)\\+\alpha \sum_{t=0}^K\nabla_\theta\mathbb{E}_\pi[C(\sigma_{t-1}, \sigma_{t})]].
\end{multline}

Note that enumerating each observation sequence is computationally very expensive. Thus, we employ sample approximation to estimate $\nabla_\theta H(W_t^k\mid Y;\theta)$, given $V$ sequences of observations $\{y_1,\ldots,y_V\}$, 
\begin{multline}
\label{eq:approx_entropy}
    \hat{\nabla}_\theta H(W_t^k\mid Y;\theta) =\\ \frac{1}{V}\sum_{v=1}^V H(W_t^k\mid Y=y_v)\nabla_\theta \log \probs_\theta(y_v).
\end{multline}

Similarly, the gradient of the expected active perception query cost at a specific time step $t$ is approximated by:
\begin{multline}
\label{eq:approx_cost}
    \hat{\nabla}_\theta \mathbb{E}_{y \sim \pi_\theta} [C(\sigma_{t-1}, \sigma_t)] =\\ \frac{1}{V}\sum_{v=1}^V C(\sigma_{v, t-1}, \sigma_{v, t}) \nabla_\theta \log \probs_\theta(y_v).
\end{multline}

Substituting Eq.~\ref{eq:approx_entropy} and Eq.~\ref{eq:approx_cost} in Eq.~\ref{eq:gradient_desent_eq}, yields the gradient update for the active perception agent.


\begin{remark}
    Although the formulation assumes binary predictive safety, the information-theoretic objective extends naturally to multi-valued variables. Expanding the codomain of the $k$-step predictability variable to a finite class set enables broader semantic monitoring tasks.
\end{remark}

\section{Experimental Validation}

We implement our proposed predictive monitor in a dynamic congestion game. 
\footnote{Simulations were executed on an Intel Core i7 CPU, 32 GB RAM, and 8GB RTX 3060 GPU (\urlstyle{same}{\url{https://github.com/Sumukha-Udupa/active-perception-for-predictive-safety}}).} 

The environment is modeled as a $45$-node graph representing a pedestrian space (Fig.~\ref{fig:campus_topology_map}). The ego agent, a robot, transports a sensitive reagent from the Space Lab (node $23$) to the Chemistry Lab (node $41$), starting at node $1$ or $44$.

\begin{figure}[htbp]
\vspace{-0.2cm}
\centering
\begin{tikzpicture}
    \node[anchor=south west, inner sep=0] (image) at (0,0) 
    {\includegraphics[width=0.65\columnwidth]{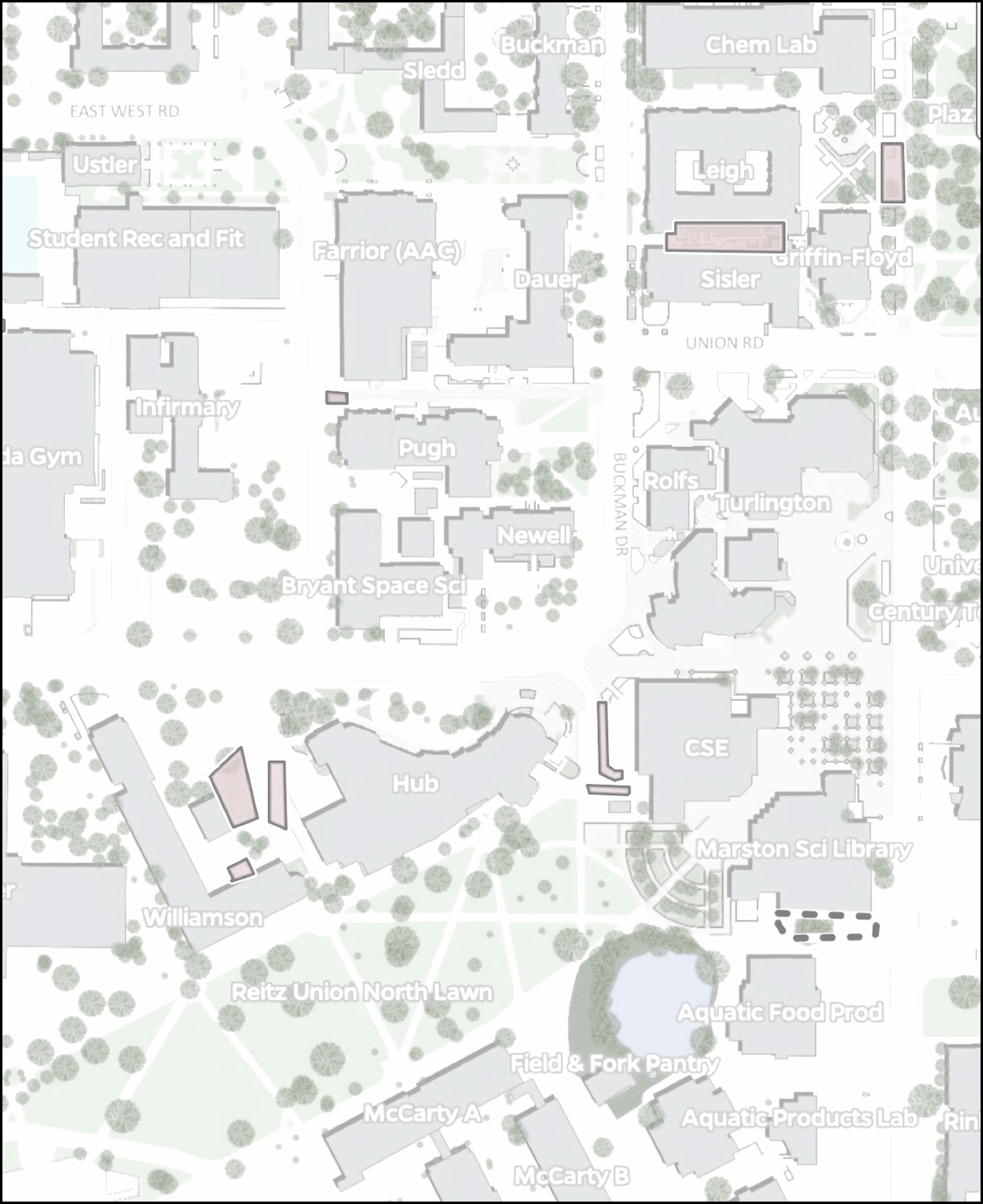}};
    
    \begin{scope}[x={(image.south east)}, y={(image.north west)}]
        
        \tikzstyle{vnode} = [circle, draw=black, fill=white, thick, inner sep=0.25pt, font=\tiny]
        \tikzstyle{startnode} = [circle, draw=black, thick, inner sep=0.25pt, font=\tiny]
        \tikzstyle{goalnode} = [circle, draw=black, thick, inner sep=0.25pt, font=\tiny]
        \tikzstyle{prob} = [midway, fill=white, fill opacity=0.8, text opacity=1, font=\scriptsize, inner sep=0pt, scale=0.9, transform shape]
        

        \node[vnode] (v1)  at (0.11, 0.05) {1};
        \node[vnode] (v2) at (0.21, 0.03) {2};
        \node[vnode] (v3) at (0.195, 0.176) {3};

        \node[vnode] (v4)  at (0.275, 0.05) {4};
        \node[vnode] (v5)  at (0.33, 0.07) {5};
        \node[vnode] (v6)  at (0.285, 0.09) {6};
        \node[vnode] (v7)  at (0.405, 0.113) {7};
        \node[vnode] (v8)  at (0.33, 0.235) {8};
        \node[vnode] (v9)  at (0.46, 0.2356) {9};
        \node[vnode] (v10)  at (0.53, 0.17) {10};
        \node[vnode] (v11)  at (0.515, 0.24) {11};
        \node[vnode] (v12)  at (0.495, 0.3) {12};
        \node[vnode] (v13)  at (0.59, 0.3) {13};
        \node[vnode] (v14)  at (0.59, 0.24) {14};
        \node[vnode] (v15)  at (0.68, 0.25) {15};
        \node[vnode] (v16)  at (0.3, 0.32) {16};
        \node[vnode] (v17)  at (0.27, 0.43) {17};
        \node[vnode] (v18) at (0.32, 0.45) {18};
        \node[vnode] (v19)  at (0.2, 0.45) {19};
        \node[vnode] (v20)  at (0.58, 0.45) {20};
        \node[vnode] (v21)  at (0.7, 0.45) {21};
        \node[vnode] (v22) at (0.44, 0.45) {22};
        \node[vnode] (v23) at (0.32, 0.53) {23};
        \node[vnode] (v24) at (0.78, 0.465) {24};
        \node[vnode] (v25) at (0.83, 0.5) {25};
        \node[vnode] (v26) at (0.92, 0.45) {26};
        \node[vnode] (v27) at (0.92, 0.58) {27};
        \node[vnode] (v28) at (0.63, 0.58) {28};
        \node[vnode] (v29) at (0.63, 0.705) {29};
        \node[vnode] (v30) at (0.92, 0.705) {30};
        \node[vnode] (v31) at (0.63, 0.8) {31};
        \node[vnode] (v32) at (0.925, 0.74) {32};
        \node[vnode] (v33) at (0.99, 0.79) {33};
        \node[vnode] (v34) at (0.94, 0.79) {34};
        \node[vnode] (v35) at (0.885, 0.79) {35};
        \node[vnode] (v36) at (0.885, 0.84) {36};
        \node[vnode] (v37) at (0.945, 0.84) {37};
        \node[vnode] (v38) at (0.945, 0.89) {38};
        \node[vnode] (v39) at (0.91, 0.92) {39};
        \node[vnode] (v40) at (0.83, 0.89) {40};
        \node[vnode] (v41) at (0.78, 0.94) {41};
        \node[vnode] (v42) at (0.81, 0.83) {42};
        \node[vnode] (v43) at (0.63, 0.91) {43};
        \node[vnode] (v44) at (0.05, 0.176) {44};
        \node[vnode] (v45) at (0.92, 0.37) {45};

\draw[-, thick, black] (v1) -- (v2); 
\draw[-, thick, black] (v1) -- (v3);
\draw[-, thick, black] (v2) -- (v4);
\draw[-, thick, black] (v3) -- (v6);
\draw[-, thick, black] (v3) -- (v8);
\draw[-, thick, black] (v4) -- (v5);
\draw[-, thick, black] (v4) -- (v6);
\draw[-, thick, black] (v5) -- (v6); 
\draw[-, thick, black] (v5) -- (v7);
\draw[-, thick, black] (v1) -- (v2); 
\draw[-, thick, black] (v6) -- (v3); 
\draw[-, thick, black] (v7) -- (v8);
\draw[-, thick, black] (v8) -- (v9);
\draw[-, thick, black] (v7) -- (v9);
\draw[-, thick, black] (v8) -- (v16);
\draw[-, thick, black] (v7) -- (v10); 
\draw[-, thick, black] (v10) -- (v11); 
\draw[-, thick, black] (v9) -- (v11); 
\draw[-, thick, black] (v9) -- (v12); 
\draw[-, thick, black] (v8) -- (v12); 
\draw[-, thick, black] (v11) -- (v12); 
\draw[-, thick, black] (v12) -- (v13); 
\draw[-, thick, black] (v13) -- (v14); 
\draw[-, thick, black] (v14) -- (v15);
\draw[-, thick, black] (v10) -- (v15);
\draw[-, thick, black] (v11) -- (v14); 
\draw[-, thick, black] (v16) -- (v17); 
\draw[-, thick, black] (v17) -- (v19); 
\draw[-, thick, black] (v17) -- (v18); 
\draw[-, thick, black] (v18) -- (v23); 
\draw[-, thick, black] (v18) -- (v22); 
\draw[-, thick, black] (v22) -- (v20); 
\draw[-, thick, black] (v20) -- (v21); 
\draw[-, thick, black] (v21) -- (v24);
\draw[-, thick, black] (v20) -- (v28); 
\draw[-, thick, black] (v24) -- (v25); 
\draw[-, thick, black] (v24) -- (v26); 
\draw[-, thick, black] (v26) -- (v45); 
\draw[-, thick, black] (v26) -- (v27); 
\draw[-, thick, black] (v26) -- (v25); 
\draw[-, thick, black] (v25) -- (v27); 
\draw[-, thick, black] (v28) -- (v29);
\draw[-, thick, black] (v27) -- (v30); 
\draw[-, thick, black] (v29) -- (v31); 
\draw[-, thick, black] (v31) -- (v43); 
\draw[-, thick, black] (v41) -- (v43); 
\draw[-, thick, black] (v30) -- (v32); 
\draw[-, thick, black] (v32) -- (v33); 
\draw[-, thick, black] (v32) -- (v34); 
\draw[-, thick, black] (v32) -- (v35);
\draw[-, thick, black] (v34) -- (v35); 
\draw[-, thick, black] (v33) -- (v34); 
\draw[-, thick, black] (v36) -- (v35); 
\draw[-, thick, black] (v34) -- (v37); 
\draw[-, thick, black] (v36) -- (v37); 
\draw[-, thick, black] (v36) -- (v40);
\draw[-, thick, black] (v37) -- (v38); 
\draw[-, thick, black] (v38) -- (v39);
\draw[-, thick, black] (v40) -- (v39); 
\draw[-, thick, black] (v42) -- (v40);
\draw[-, thick, black] (v40) -- (v41);
\draw[-, thick, black] (v44) -- (v3); 
        
        \draw[draw=red, very thick, dashed, rounded corners=6pt] 
            (0.08, 0.04) -- (0.21, 0.01) -- (0.30, 0.04) -- (0.22, 0.12) -- (0.22, 0.20) -- (0.17, 0.19) -- (0.15, 0.10) -- cycle;
        \node[font=\scriptsize, red] at (0.2, 0.12) {$\mathbf{A}$};
        
        \draw[draw=blue, very thick, dashed, rounded corners=4pt] 
            (0.40, 0.08) -- (0.49, 0.25) -- (0.30, 0.25) -- cycle;
        \node[font=\scriptsize, blue] at (0.4, 0.2) {$\mathbf{B}$};
        
        \draw[draw=red!70!black, very thick, dashed, rounded corners=4pt] 
            (0.50, 0.22) -- (0.61, 0.22) -- (0.61, 0.32) -- (0.47, 0.32) -- (0.50, 0.26) -- cycle;
        \node[font=\scriptsize, red!70!black] at (0.55, 0.27) {$\mathbf{C}$};
        
        \draw[draw=orange, very thick, dashed, rounded corners=4pt] 
            (0.24, 0.45) -- (0.35, 0.48) -- (0.31, 0.29) -- (0.28, 0.29) -- cycle;
        \node[font=\scriptsize, orange] at (0.3, 0.39) {$\mathbf{D}$};
        
        \draw[draw=purple, very thick, dashed, rounded corners=4pt] 
            (0.42, 0.43) rectangle (0.72, 0.47);
        \node[font=\scriptsize, purple] at (0.57, 0.49) {$\mathbf{E}$};
        
        \draw[draw=cyan, very thick, dashed, rounded corners=4pt] 
            (0.79, 0.50) -- (0.95, 0.42) -- (0.95, 0.62) -- cycle;
        \draw[draw=cyan, very thick, dashed] 
            (0.63, 0.58) circle (0.02);
        \node[font=\scriptsize, cyan!80!black] at (0.89, 0.525) {$\mathbf{F}$};
        \node[font=\scriptsize, cyan!80!black] at (0.61, 0.62) {$\mathbf{F}$};
        
        \draw[draw=magenta, very thick, dashed, rounded corners=4pt] 
            (0.89, 0.68) rectangle (0.96, 0.76);
        \draw[draw=magenta, very thick, dashed] 
            (0.63, 0.80) circle (0.02);
        \draw[draw=magenta, very thick, dashed] 
            (0.99, 0.79) circle (0.02);
        \node[font=\scriptsize, magenta] at (0.61, 0.84) {$\mathbf{G}$};
        \node[font=\scriptsize, magenta] at (0.98, 0.73) {$\mathbf{G}$};
        
        \draw[draw=blue!60!black, very thick, dashed, rounded corners=4pt] 
            (0.865, 0.77) -- (0.865, 0.86) -- (0.965, 0.91) -- (0.965, 0.82) -- cycle;
        \node[font=\scriptsize, blue!60!black] at (0.965, 0.92) {$\mathbf{H}$};
        
        \draw[draw=olive, very thick, dashed, rounded corners=6pt] 
            (0.61, 0.89) -- (0.61, 0.93) -- (0.78, 0.96) -- (0.85, 0.91) -- (0.83, 0.81) -- (0.79, 0.81) -- (0.79, 0.87) -- (0.75, 0.91) -- cycle;
        \node[font=\scriptsize, olive] at (0.70, 0.96) {$\mathbf{I}$};
        
        \draw[draw=brown, very thick, dashed] 
            (0.20, 0.45) circle (0.02);
        \node[font=\scriptsize, brown] at (0.18, 0.48) {$\mathbf{J}$};

    \end{scope}
\end{tikzpicture}
\vspace{-0.275cm}
\caption{Environment topological graph with sensor coverages.}
\label{fig:campus_topology_map}
\vspace{-0.2cm}
\end{figure}



\noindent \textbf{Environment Dynamics} 
The ego agent follows the stochastic policy given by the transition probabilities in Fig.~\ref{fig:campus_map_fig}. Two independent dynamic obstacles perform stochastic random walks on the graph. The first patrols nodes $3$–$22$ (${3,5,6,7}$ as initial), and the second $29$–$40$ (${29,30}$ as initial). As the zones are disjoint, the ego agent encounters at most one obstacle, so the monitor tracks only the locally relevant obstacle. The second agent activates after the ego exits $3$-$22$. The joint state is $(s_e,s_t)$ for the ego and traffic agents.

\begin{figure}[htbp]
\vspace{-0.3cm}
    \centering
    \resizebox{0.75\columnwidth}{!}{
        \begin{tikzpicture}[x=0.82cm, y=0.8cm, >=stealth]
    
    \tikzstyle{state} = [circle, draw=black, thin, fill=white, inner sep=0.5pt, font=\tiny, minimum size=0.35cm]
    \tikzstyle{start} = [circle, draw=black, thick, fill=white, inner sep=0.5pt, font=\tiny, minimum size=0.35cm]
    \tikzstyle{goal}  = [circle, draw=black, thick, fill=white, inner sep=0.5pt, font=\tiny, minimum size=0.35cm]
    \tikzstyle{prob}  = [midway, fill=white, fill opacity=0.8, text opacity=1, font=\tiny, inner sep=0pt, scale=0.5, transform shape]
    
    
    \node[start] (v1)  at (0, 0) {1};
    \node[start] (v44) at (0, -1) {44};
    \node[state] (v17) at (0, -2) {17};
    \node[state] (v19) at (0, -3) {19};
    
    \node[state] (v2)  at (1, 0) {2};
    \node[state] (v3)  at (1, -1) {3};
    \node[state] (v16) at (1, -2) {16};
    \node[state] (v18) at (1, -3) {18};
    \node[state] (v23) at (1, -4) {23}; 
    
    \node[state] (v4)  at (2, 0) {4};
    \node[state] (v6)  at (2, -1) {6};
    \node[state] (v8)  at (2, -2) {8};
    \node[state] (v22) at (2, -3) {22};
    
    \node[state] (v5)  at (3, 0) {5};
    \node[state] (v7)  at (3, -1) {7};
    \node[state] (v9)  at (3, -2) {9};
    \node[state] (v12) at (3, -3) {12};
    \node[state] (v20) at (3, -4) {20};
    
    \node[state] (v28) at (4, 0) {28};
    \node[state] (v10) at (4, -1) {10};
    \node[state] (v11) at (4, -2) {11};
    \node[state] (v13) at (4.5, -3) {13}; 
    \node[state] (v21) at (4, -4) {21};
    
    \node[state] (v29) at (5, 0) {29};
    \node[state] (v15) at (5, -1) {15};
    \node[state] (v14) at (5, -2) {14};
    \node[state] (v24) at (4.8, -3.4) {24}; 
    
    \node[state] (v25) at (5.5, -2.5) {25}; 
    \node[state] (v26) at (5.5, -3.85) {26}; 
    \node[state] (v30) at (6, 0) {30};
    \node[state] (v27) at (6, -3) {27}; 
    \node[state] (v33) at (6.3, -1) {33}; 
    \node[state] (v31) at (6.5, -2.5) {31}; 
    
    \node[state] (v32) at (7, 0) {32};
    \node[state] (v34) at (7.4, -1) {34};
    
    \node[state] (v35) at (8, 0) {35};
    \node[state] (v36) at (8, -1) {36};
    \node[state] (v37) at (8, -2) {37};
    \node[state] (v38) at (8, -3) {38};
    \node[state] (v43) at (8.5, -4) {43}; 
    
    \node[state] (v41)  at (9, 0) {41};
    \node[state] (v40) at (9, -1) {40};
    \node[state] (v42) at (8.4, -1.6) {42}; 
    \node[state] (v39) at (9, -3) {39};

    \begin{scope}[every path/.style={->, thin, black, shorten >=0.5pt, shorten <=0.5pt}]
        
        \draw (v1) -- node[prob] {$0.5$} (v2);
        \draw (v1) -- node[prob] {$0.5$} (v3);
        \draw (v44) -- node[prob] {$1.0$} (v3);
        \draw (v2) -- node[prob] {$1.0$} (v4);
        \draw (v3) to[bend left=10] node[prob] {$0.2$} (v6);
        \draw (v3) -- node[prob] {$0.8$} (v8);
        
        \draw (v4) -- node[prob] {$0.7$} (v5);
        \draw (v4) -- node[prob] {$0.3$} (v6);
        \draw (v5) to[bend right=10] node[prob] {$0.5$} (v6);
        \draw (v5) -- node[prob] {$0.5$} (v7);
        \draw (v6) to[bend left=10] node[prob] {$0.5$} (v3);
        \draw (v6) to[bend right=10] node[prob] {$0.5$} (v5);
        \draw (v7) to[bend right=15] node[prob] {$0.7$} (v8);
        \draw (v7) -- node[prob] {$0.15$} (v9);
        \draw (v7) -- node[prob] {$0.15$} (v10);
        
        \draw (v8) to[bend left=10] node[prob] {$0.1$} (v9);
        \draw (v8) -- node[prob] {$0.2$} (v12);
        \draw (v8) -- node[prob] {$0.7$} (v16);
        \draw (v9) to[bend left=10] node[prob] {$0.45$} (v8);
        \draw (v9) -- node[prob] {$0.45$} (v12);
        \draw (v9) to[bend left=10] node[prob] {$0.1$} (v11);
        \draw (v10) -- node[prob] {$0.7$} (v11);
        \draw (v10) -- node[prob] {$0.3$} (v15);
        \draw (v11) -- node[prob] {$0.5$} (v12);
        \draw (v11) to[bend left=10] node[prob] {$0.5$} (v9);
        
        \draw (v12) -- node[prob] {$0.75$} (v8); 
        \draw (v12) to[bend left=10] node[prob] {$0.25$} (v13);
        \draw (v13) to[bend left=10] node[prob] {$0.5$} (v12);
        \draw (v13) to[bend right=15] node[prob] {$0.1$} (v14);
        \draw (v13) to[bend left=15] node[prob] {$0.4$} (v20);
        \draw (v14) -- node[prob] {$0.7$} (v11);
        \draw (v14) to[bend left=10] node[prob] {$0.15$} (v15);
        \draw (v14) to[bend right=10] node[prob] {$0.15$} (v13);
        \draw (v15) to[loop above] node[prob, pos=0.2] {$0.4$} (v15); 
        \draw (v15) to[bend left=10] node[prob] {$0.4$} (v14);
        \draw (v15) -- node[prob] {$0.2$} (v10);
        
        \draw (v16) -- node[prob] {$1.0$} (v17);
        \draw (v17) to[bend right=20] node[prob] {$0.25$} (v19);
        \draw (v17) to[bend left=10] node[prob] {$0.75$} (v18);
        \draw (v18) to[bend left=10] node[prob] {$0.05$} (v17);
        \draw (v18) -- node[prob] {$0.45$} (v23);
        \draw (v18) to[bend left=10] node[prob] {$0.5$} (v22);
        \draw (v19) to[bend right=20] node[prob] {$0.5$} (v17);
        \draw (v19) to[loop below] node[prob] {$0.5$} (v19); 
        
        \draw (v20) to[bend right=40] node[prob, right] {$0.55$} (v28); 
        \draw (v20) to[bend right=15] node[prob] {$0.2$} (v22);
        \draw (v20) -- node[prob] {$0.2$} (v21);
        \draw (v20) to[bend left=10] node[prob] {$0.05$} (v13);
        \draw (v21) -- node[prob] {$1.0$} (v24);
        \draw (v22) to[bend left=15] node[prob] {$0.65$} (v18);
        \draw (v22) to[bend right=20] node[prob] {$0.35$} (v20);
        \draw (v23) to[bend left=20] node[prob] {$0.9$} (v18); 
        \draw (v23) to[loop left] node[prob] {$0.1$} (v23);
        
        \draw (v24) -- node[prob] {$0.5$} (v25);
        \draw (v24) -- node[prob] {$0.5$} (v26);
        \draw (v25) -- node[prob] {$1.0$} (v27);
        \draw (v26) -- node[prob] {$1.0$} (v27);
        \draw (v27) -- node[prob, pos=0.25, left] {$1.0$} (v30);
        \draw (v28) -- node[prob] {$1.0$} (v29);
        \draw (v29) -- node[prob] {$0.7$} (v30);
        \draw (v29) to[bend left=10] node[prob] {$0.3$} (v31);
        \draw (v30) -- node[prob] {$1.0$} (v32);
        \draw (v31) to[bend left=10] node[prob] {$0.1$} (v29);
        \draw (v31) -- node[prob, pos=0.75] {$0.9$} (v43);
        
        \draw (v32) to[bend left=10] node[prob] {$0.05$} (v33);
        \draw (v32) -- node[prob] {$0.5$} (v34);
        \draw (v32) -- node[prob] {$0.45$} (v35);
        \draw (v33) to[bend left=10] node[prob] {$0.1$} (v32);
        \draw (v33) -- node[prob] {$0.85$} (v34);
        \draw (v33) to[loop above] node[prob] {$0.05$} (v33);
        \draw (v34) -- node[prob] {$0.5$} (v37);
        \draw (v34) -- node[prob] {$0.5$} (v35);
        \draw (v35) -- node[prob] {$1.0$} (v36);
        \draw (v36) -- node[prob] {$1.0$} (v40);
        \draw (v37) to[bend left=10] node[prob] {$0.9$} (v36);
        \draw (v37) -- node[prob] {$0.1$} (v38);
        \draw (v38) -- node[prob] {$1.0$} (v39);
        \draw (v39) -- node[prob] {$1.0$} (v40);
        
        \draw (v40) -- node[prob] {$0.8$} (v41);
        \draw (v40) to[bend left=10] node[prob] {$0.2$} (v42);
        \draw (v41) to[loop left] node[prob, pos=0.2] {$1.0$} (v41); 
        \draw (v42) to[bend left=10] node[prob] {$1.0$} (v40);
        \draw (v43) to[bend right=55] node[prob, right] {$1.0$} (v41);
        
    \end{scope}

\end{tikzpicture} 
    }
    \vspace{-0.3cm}
    \caption{State transition graph representing the ego agent's goal policy.}
    \label{fig:campus_map_fig}
    \vspace{-0.3cm}
\end{figure}
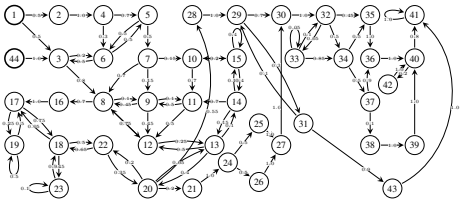

We require ego to visit node $23$ before $41$. Collisions or reaching  $41$ first yield an absorbing failure state. Formally, $\varphi=(\neg \text{node 41}\until \text{node 23})\land \Always \neg \text{crash}$ with \ac{dfa} as in Fig.~\ref{fig:campus_example_dfa}.

\begin{figure}[h]
\vspace{-0.3cm}
    \centering  


\begin{tikzpicture}[shorten >=1pt, >=stealth, auto, semithick, every state/.style={minimum size=0.2cm, inner sep=0pt}]
   
   \node[state, initial, initial text=start] (q0) at (0, 0) {$q_0$};
   \node[state] (q1) at (4, 0) {$q_1$};
   \node[state, accepting, fill=red!10] (qfail) at (2, -1) {$q_{\text{fail}}$};

   \path[->]
    (q0) edge [loop above] node {$\neg $23$ \land \neg $41$\land \neg crash$} (q0)
         edge node [above] {$23 \land \neg 41\land \neg crash$} (q1)
         edge node [below left] {$crash\lor (\neg 23\land 41)$} (qfail)
    (q1) edge [loop above] node {$\neg 41\land \neg crash$} (q1)
         edge node [below right] {$crash$} (qfail)
    (qfail) edge [loop right] node {$1$} (qfail);
\end{tikzpicture} 
   \vspace{-1.1cm}
    \caption{Failure \ac{dfa} for the dynamic congestion game example.}
    \vspace{-0.3cm}
\label{fig:campus_example_dfa}
\end{figure}
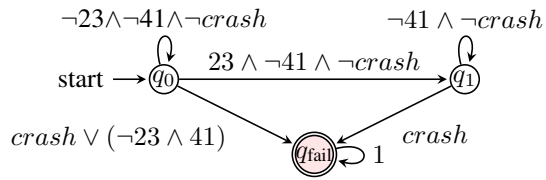



A safety monitor in the environment enforces $k$-step predictive safety. It lacks access to the joint state, relying instead on active perception via $10$ localized proximity boolean sensors with partial node coverage: $\mathbf{A}:\{1,2,3,4\}$, $\mathbf{B}:\{7,8,9\}$, $\mathbf{C}:\{11,12,13,14\}$, $\mathbf{D}:\{16,17,18\}$, $\mathbf{E}:\{20,21,22\}$, $\mathbf{F}:\{25,26,27,28\}$, $\mathbf{G}:\{30,31,32,33\}$, $\mathbf{H}:\{35,36,37,38\}$, $\mathbf{I}:\{40,41,42,43\}$, and $\mathbf{J}:\{19\}$.

The monitor uses horizon $K=30$. At each step $t$, it queries one sensor or takes ``no query''. Maintaining an active sensor costs $5$, switching $10$, and idling $0$. Queries return $\{\text{S}_{\text{ego}},\text{S}_{\text{traffic}},\text{S}_{\text{both}}\}$ or null with a false negative rate of $0.15$.



We parametrize the active perception policy $\pi_\theta$ using a \ac{lstm} to encode the observation history into an implicit information state. At each time step $t$, the hidden state updates as $h_t = \text{LSTM}(o_t, \sigma_t, h_{t-1};\theta_{rnn})$, which is then mapped to a probability distribution over the action space $\Sigma$ via softmax activation.

\noindent \textbf{Results and Discussion}
Predictive performance is benchmarked against two baselines:
\begin{enumerate}
\item \textbf{Uniform Random:} Uniform sensor queries.
\item \textbf{Oracle:} Perfect, cost-free observability of the true product state $z_t$ (performance upper bound).
\end{enumerate}

To evaluate the monitor, we use the predictive safety probability $\probs(W_t^k=0\mid y_t)$ from Eq.~\ref{eq:safe_prob_calc}, which measures the it’s situational awareness and early-warning accuracy under limited sensor cost. Monte-Carlo simulations over sampled trajectories estimate the posterior prediction accuracy. Performance is then quantified via the mean Brier score and gap closure, the fraction of the Random–Oracle baseline gap closed by the learned policy:
$$\text{Gap Closure} = \frac{B_{\text{random}} - B_{\text{lstm}}}{B_{\text{random}} - B_{\text{oracle}}},$$ 
where $B_x$ is the Brier score under policy $x$. 
\begin{table*}[htbp] 
\centering
\caption{Predictive Safety Performance (Mean Brier Score $\pm$ 95\% CI)}
\vspace{-0.3cm}
\label{tab:predictive_error_comparison}
\footnotesize 
\begin{tabular}{@{}lccccccc@{}}
\toprule
\multirow{2}{*}{\textbf{Horizon ($k$)}} & \multicolumn{2}{c}{\textbf{Uniform Random}} & \multicolumn{2}{c}{\textbf{Trained Policy}} & \multirow{2}{*}{\textbf{Oracle}} & \multirow{2}{*}{\textbf{\% Imprv.}} & \multirow{2}{*}{\textbf{Gap Closure}} \\ \cmidrule(lr){2-3} \cmidrule(lr){4-5}
 & \textbf{Brier Score} & \textbf{Cost} & \textbf{Brier Score} & \textbf{Cost} &  &  &  \\ \midrule
$k=1$ & $0.1791 \pm 0.0075$ & $227.07 \pm 4.00$ & $\mathbf{0.0564 \pm 0.0029}$ & $\mathbf{200.65 \pm 2.24}$ & $0.0149 \pm 0.0006$ & $68.53\%$ & $74.72\%$ \\
$k=3$ & $0.1880 \pm 0.0066$ & $224.07 \pm 4.07$ & $\mathbf{0.0799 \pm 0.0035}$ & $\mathbf{198.11 \pm 2.00}$ & $0.0386 \pm 0.0016$ & $57.50\%$ & $72.33\%$ \\
$k=5$ & $0.1931 \pm 0.0058$ & $230.31 \pm 3.80$ & $\mathbf{0.0939 \pm 0.0040}$ & $\mathbf{195.84 \pm 1.87}$ & $0.0576 \pm 0.0025$ & $51.38\%$ & $73.21\%$ \\
$k=10$ & $0.2007 \pm 0.0055$ & $224.57 \pm 4.14$ & $\mathbf{0.1255 \pm 0.0049}$ & $\mathbf{194.55 \pm 1.78}$ & $0.0921 \pm 0.0040$ & $37.47\%$ & $69.24\%$ \\
$k=15$ & $0.2012 \pm 0.0069$ & $230.18 \pm 3.81$ & $\mathbf{0.1395 \pm 0.0059}$ & $\mathbf{194.85 \pm 1.91}$ & $0.1124 \pm 0.0051$ & $30.66\%$ & $69.48\%$ \\
\bottomrule
\end{tabular}
\vspace{-0.6cm}
\end{table*}

Table~\ref{tab:predictive_error_comparison} reports predictive performance of the baselines and learned policy for $k \in \{1,3,5,10,15\}$. Fig.~\ref{fig:campus_example_accuracy_plot} shows posterior prediction over $1000$ sample trajectories. Fig.~\ref{fig:campus_example_convergence} shows training convergence for $k=3$ and perception penalty $\alpha=0.04$. The learning dynamics reflect a multi-objective trade-off: early training prioritizes uncertainty reduction, rapidly lowering the average predictive conditional entropy.
Later, the cost penalty $\alpha$ dominates, yielding resource savings with only a marginal increase in predictive uncertainty.  

\begin{figure}
    \centering
    \includegraphics[width=0.65\linewidth]{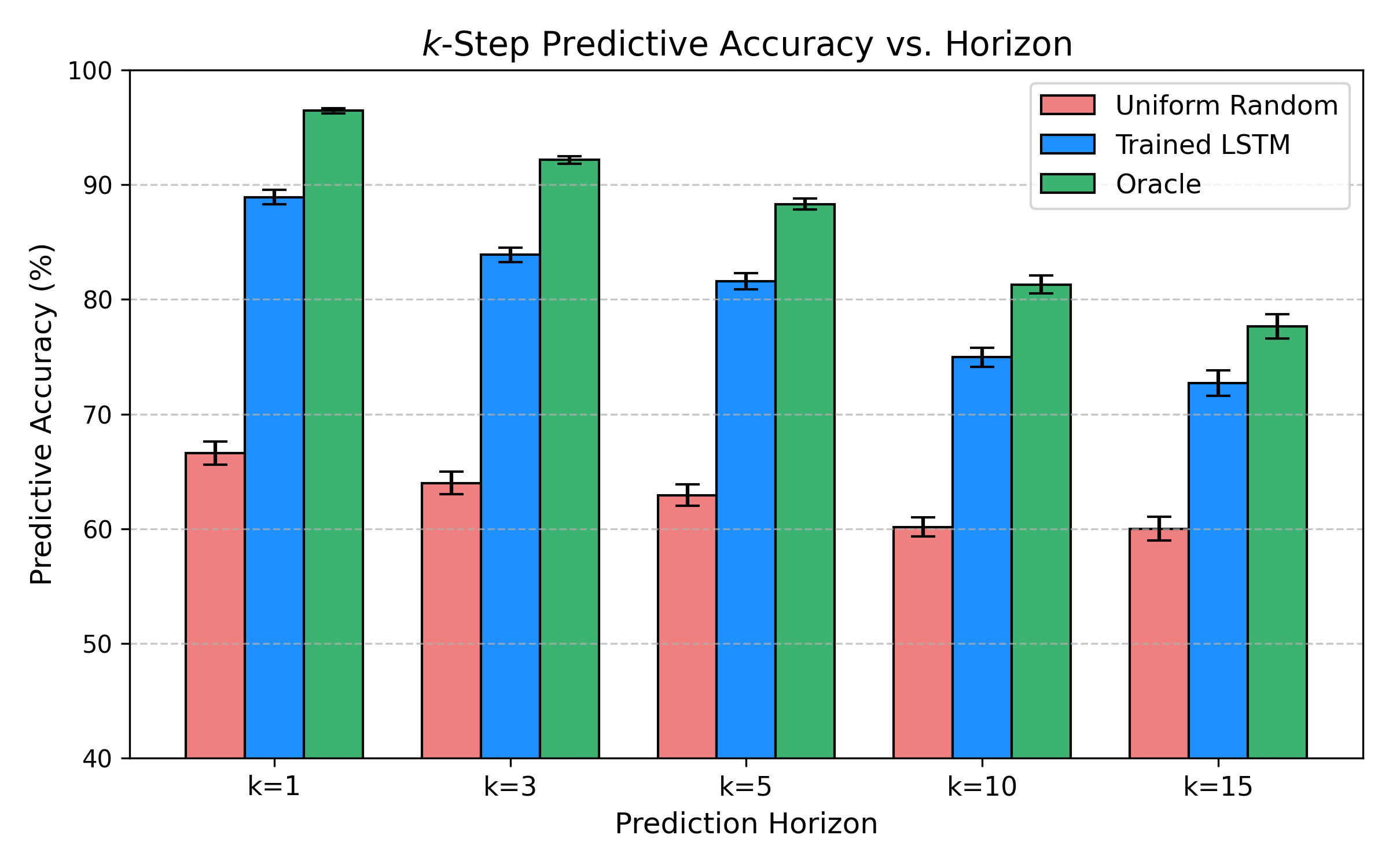}
    \vspace{-0.4cm}
    \caption{Comparison of  $k$-step prediction accuracy with posterior sampling.}
    \label{fig:campus_example_accuracy_plot}
    \vspace{-0.5cm}
\end{figure}

\begin{figure}
\vspace{0cm}
    \centering
    \includegraphics[width=0.85\linewidth]{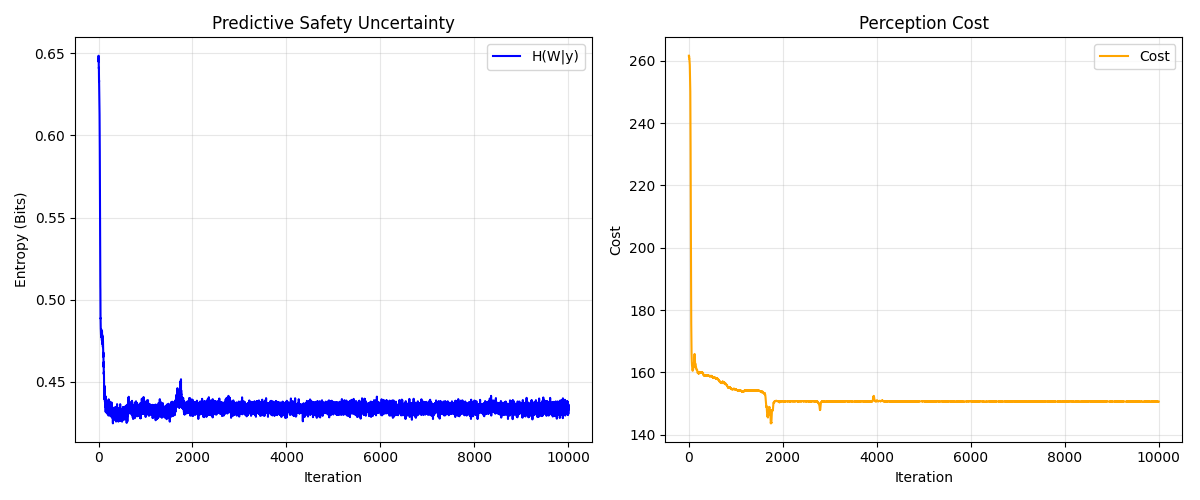}
    \vspace{-0.4cm}
    \caption{Convergence of the policy gradient method for $k=3$, $\alpha=0.04$.}
    \label{fig:campus_example_convergence}
    \vspace{-0.4cm}
\end{figure}

Table~\ref{tab:predictive_error_comparison} shows the learned policy outperforms the random baseline and approaches the Oracle upper bound. Brier scores increase with prediction horizon due to stochastic uncertainty, yet the policy maintains a $30.66\%$–$68.53\%$ improvement over the random baseline.

We evaluate the trade-off between predictive safety and perception cost by varying $\alpha$ from $0.0$ to $0.07$. Fig.~\ref{fig:campus_example_alpha_scatter} shows the Brier scores for $k=3$.
\begin{figure}
    \centering
    \includegraphics[width=0.65\linewidth]{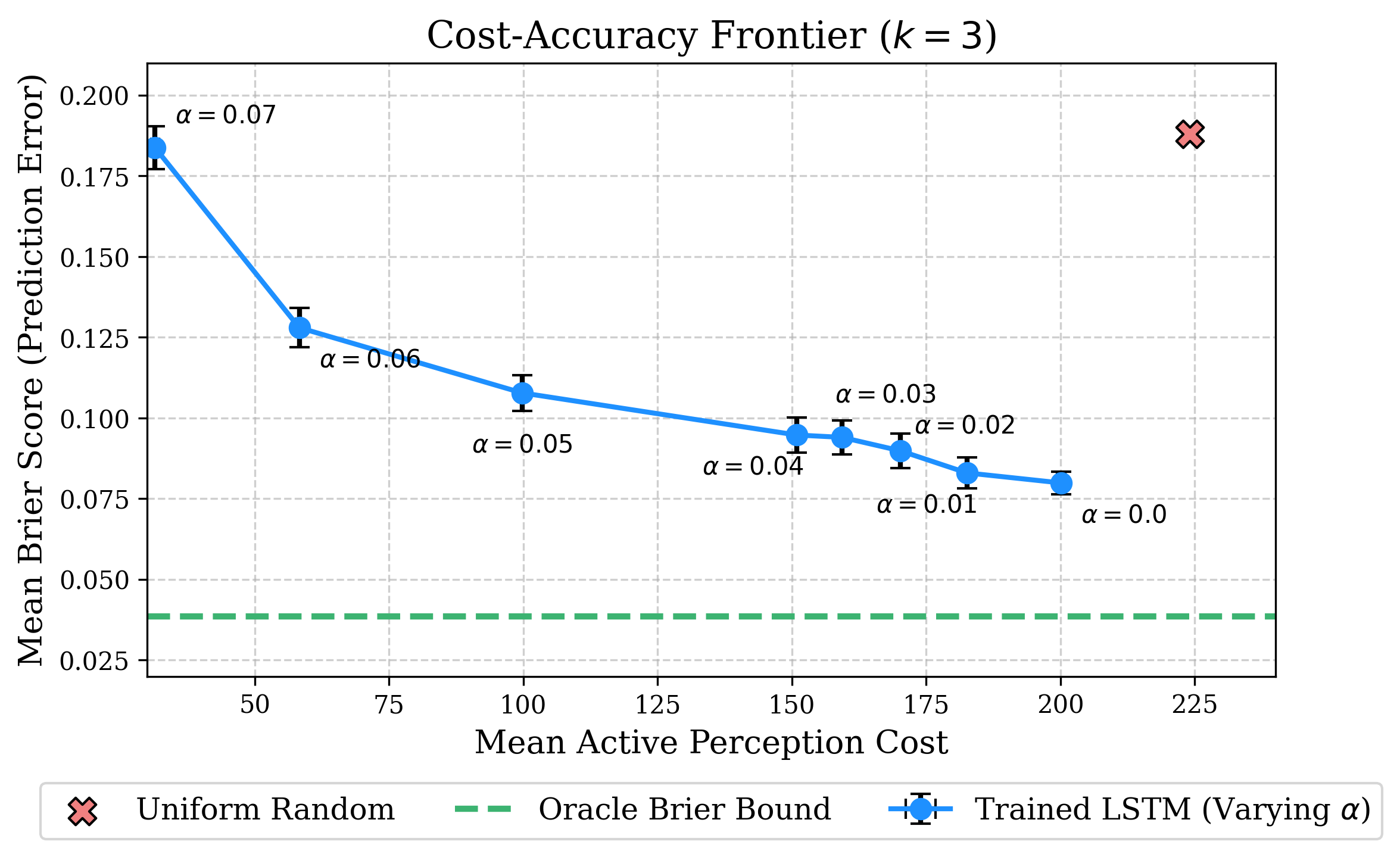}
    \vspace{-0.35cm}
    \caption{Comparison of Brier scores for varying $\alpha$.}
    \vspace{-0.8cm}
    \label{fig:campus_example_alpha_scatter}
\end{figure}
As $\alpha$ increases, the learned policy suppresses unnecessary actions, reducing mean cost. It still outperforms the baseline at high penalties (e.g., $\alpha=0.06$), with about $73\%$ lower cost than random policy. Fig.~\ref{fig:campus_example_alpha_scatter} also shows slightly higher Brier scores for $\alpha \in [0.01,0.04]$, with similar performance at costs $150.9$ vs.\ $182.58$, implying a small increase in prediction uncertainty.

\section{Conclusion}

This letter presented an information-theoretic active perception planning method for $k$-step predictive safety monitoring under partial observability. The proposed approach enables active monitors to selectively acquire informative observations in safety-critical cyber-physical systems with resource-constrained sensing and communication capabilities. By explicitly targeting uncertainty that affects future safety predictions, the method complements existing runtime monitoring and predictive verification techniques. Moreover, it can be integrated with conformal prediction to manage uncertainty bounds or combined with fault-prognosis-driven control strategies to enable timely intervention. Future work will extend the framework to multi-modal sensing systems with continuous observations and investigate scalable implementations for large-scale autonomous systems.


\bibliography{refs}
\bibliographystyle{ieeetr}
\end{document}